\documentclass[reqno,12pt]{amsart}

\usepackage[a4paper, 
            left=1.in,
            right=1in,
            top=1in,
            bottom=1in, 
            footskip=.25in]{geometry}

\usepackage[utf8]{inputenc}
\usepackage{microtype}
\usepackage{graphicx}
\usepackage{amssymb} 
\usepackage[colorlinks = true,
            linkcolor = blue,
            urlcolor  = blue,
            citecolor = blue,
            anchorcolor = blue]{hyperref}

\usepackage{enumerate}
\usepackage{enumitem}
\usepackage{thmtools} 
\usepackage{thm-restate}

\usepackage{microtype}

\usepackage{xcolor}

\theoremstyle{plain}
\newtheorem{thm}{}[section]
\newtheorem{lemma}[thm]{Lemma}

\newtheorem{theorem}[thm]{Theorem}

\theoremstyle{remark}
\newtheorem{observation}[thm]{Observation}

\newtheorem{definition}[thm]{Definition}  
  
 \newtheorem{example}[thm]{Example}

\usepackage{centernot}  
\usepackage{xspace}

 \usepackage{stmaryrd}   

\usepackage[scr=boondoxo]{mathalpha}  
\newcommand{\coloneqq}{\mathrel{\mathop:}\mathrel{\mkern-1.2mu}=} 

\newcommand{\ass}[1]{\ensuremath{\llbracket#1\rrbracket}\xspace}

\newcommand{\struct}[1]{\mathfrak{#1}}     
\newcommand{\CSP}{\ensuremath{\mathrm{CSP}}\xspace}   
   
\newcommand{\PO}{\ensuremath{\mathrm{P}}\xspace}  
\newcommand{\PH}{\ensuremath{\mathrm{PH}}\xspace}  
\newcommand{\QCSP}{\ensuremath{\mathrm{QCSP}}\xspace}

\newcommand{\fm}{\ensuremath{\mathrm{fm}}\xspace}

\newcommand{\temp}[1]{#1}

\usepackage{caption}
\AtBeginDocument{\fontsize{12pt}{15pt}\selectfont}

\begin{document}

\title[]{The Polynomial Hierarchy and $\omega$-categorical CSPs}

\author{Santiago Guzm\'{a}n-Pro}  
\author{Jakub Rydval}

\address{Technische Universit\"{a}t Dresden, Dresden, Germany}
\email{santiago.guzman\_pro@tu-dresden.de}

\address{Technische Universit\"{a}t Wien, Vienna, Austria}
\email{jakub.rydval@tuwien.ac.at}

 \begin{abstract}   
In 2008, Bodirsky and Grohe showed that for every 
$\Pi_n^{\mathrm{P}}$-level of the Polynomial Hierarchy (PH) there are $\omega$-categorical Constraint Satisfaction Problems (CSPs) complete for this level.
We show that, in fact, there are $\omega$-categorical CSPs complete for any level of the PH.  
To this end, we use a recent result of Bodirsky, Kn\"{a}uer, and Rudolph for
constructing $\omega$-categorical CSPs from sentences of Monadic Second-Order logic (MSO) with certain preservation properties. 
As a secondary contribution, we develop a new tool for producing MSO sentences satisfying said preservation properties.
\end{abstract}  

 \thanks{\emph{Santiago Guzm\'{a}n-Pro}: This research is funded by the European Union (ERC, POCOCOP, 101071674). Views and opinions expressed are however those of the author(s) only and do not necessarily reflect those of the European Union or the European Research Council Executive Agency. Neither the European Union nor the granting authority can be held responsible for them.
 \\
 \emph{Jakub Rydval}: This research was funded in whole or in part by the Austrian Science Fund (FWF) [ESP 1571225]. For the purpose of Open Access, the authors have applied a CC BY public copyright licence to any Author Accepted Manuscript (AAM) version arising from this submission.}
%

\maketitle


\section{Introduction}\label{section:introduction}  

The \emph{Constraint Satisfaction Problem} (CSP) of a  structure $\struct{B}$ in a finite relational signature $\tau$, denoted by $\CSP(\struct{B})$, is the computational problem of deciding whether a given
finite $\tau$-structure $\struct{A}$ homomorphically maps to $\struct{B}$. 
The general CSP framework is remarkably expressive; in fact, it contains all decision problems up to polynomial-time Turing equivalence~\cite{bodirsky2008non}.  
For this reason, CSPs are typically only studied under additional structural restrictions on $\struct{B}$.
One natural  restriction is requiring the domain of the \emph{template} structure $\struct{B}$ to be finite. 
The class of finite-domain CSPs famously constitutes a large fragment of \textup{\textrm{NP}} that admits a dichotomy between \textup{\textrm{P}} and \textup{\textrm{NP}}-completeness; this was proved in 2017 independently by Bulatov~\cite{bulatov2017dichotomy} and Zhuk~\cite{zhuk2017proof,zhuk2020proof}. 

Besides its complexity-theoretic significance (in particular, of confirming the dichotomy conjecture of Feder and Vardi~\cite{federvardi1998}), the result of Bulatov and Zhuk has the additional appeal that the border for tractability can be described by a neat
algebraic condition for the \emph{polymorphisms} of $\struct{B}$~\cite{barto2018wonderland,Siggers_2010}.
A polymorphism of a relational structure $\struct{B}$ is simply a homomorphism from a finite power of $\struct{B}$ into $\struct{B}$ itself.

The  success of the \emph{algebraic approach} in the finite-domain setting naturally raises the question of how far it extends to the infinite-domain case.
Although infinite-domain CSPs in general can attain arbitrary complexities, there are interesting subclasses whose template
cannot be captured by any finite-domain CSP, yet whose complexity is still determined by the polymorphisms of the template.
%
One of the most celebrated examples is the family of \emph{temporal} CSPs, i.e., CSPs of first-order reducts of $(\mathbb Q; <)$. Similar to finite-domain CSPs, temporal CSPs also enjoy a P vs.\ NP-complete dichotomy with a clean algebraic counterpart~\cite{ComplOfTempCSPs}.
Based on strong empirical evidence, Bodirsky and Pinsker~\cite{bodirsky2021projective} conjectured
that a P vs.\ NP-complete dichotomy characterized in terms of polymorphisms extends to all CSPs of \emph{reducts of finitely bounded homogeneous} structures, forming a computationally well-behaved subclass of \emph{$\omega$-categorical} structures~\cite{barto2019equations}, and providing an infinite-domain analogue of the dichotomy theorem of Bulatov and Zhuk.

The notion of $\omega$-categoricity  was popularized by Bodirsky and Ne\v{s}et\v{r}il~\cite{bodirsky2006constraint} as a sufficient restriction ensuring that primitive positive definability of relations is determined by polymorphisms.
A countable structure $\struct{A}$ is $\omega$-categorical if, for every positive integer $k$, the number of $k$-ary relations first-order definable in $\struct{A}$ is finite~\cite{hodges_book}; with $(\mathbb Q; <)$ being a prototypical example.
Bodirsky and Ne\v{s}et\v{r}il showed that if two countable $\omega$-categorical structures share the same set of polymorphisms, then they also share the same set of primitive positive definable relations~\cite{bodirsky2006constraint}.
Since expanding a CSP template by a primitive positive definable relation does not impact the complexity of the CSP (up to \textsc{logspace}-reductions), we say that the complexity of an $\omega$-categorical CSP is determined by the polymorphisms of its template.
Clearly, all finite structures are $\omega$-categorical, but this class also contains numerous interesting infinite structures, and
most of the research on the \emph{algebraic methods} for
infinite-domain CSPs has focused on this setting~\cite{10175732,barto2019equations,barto2018wonderland,barto_pinsker_journal,11186274,mottet2024smooth}.

We thus believe 
it is a central question 
to better understand which complexity classes 
appear in the complexity landscape of $\omega$-categorical CSPs, and more specifically, 
which computational problems can be represented as $\omega$-categorical CSPs.

Bodirsky and Grohe~\cite{bodirsky2008non} in fact already considered the first question in 2008.
From Theorem~2 in their paper  it follows that, if $\mathcal{C}$ is any complexity class for which there exist $\mathrm{coNP}^{\mathcal{C}}$-complete problems, then there exists an $\omega$-categorical structure $\struct{B}$ with a finite relational signature such that $\CSP(\struct{B})$ is $\mathrm{coNP}^{\mathcal{C}}$-complete.
In particular, there are $\omega$-categorical CSPs complete for the  $\Pi_n^{\PO}$-levels of the \emph{Polynomial Hierarchy} (\textrm{PH}).
This result later received a universal-algebraic upgrade~\cite[Theorem~1.8]{gillibert2022symmetries}, showing that the algebraic conditions  for polymorphisms (\emph{non-trivial identities}) that are expected to witness tractability of $\omega$-categorical CSPs in \textrm{NP} can actually be satisfied by the polymorphisms of $\omega$-categorical structures whose CSPs are complete for any complexity class lying above \textrm{coNP}, including \textrm{coNP} itself. 
Furthermore, the construction of $\mathrm{coNP}^{\mathcal{C}}$-complete $\omega$-categorical CSPs from~\cite{bodirsky2008non} also allows one to replicate Ladner’s method of ``blowing holes
into SAT''. 
This was used by Bodirsky and Grohe~\cite[Theorem~4]{bodirsky2008non} to show the existence of \textrm{coNP}-intermediate $\omega$-categorical CSPs (unless $\textrm{P}=\textrm{coNP}$). 
In other words, there is no dichotomy between \textrm{P} and \textrm{coNP}-completeness for $\omega$-categorical CSPs in \textrm{coNP}. 

All results mentioned in the previous paragraph follow a similar strategy.
First, given a formal language $\mathcal{L}$, its complement is turned into a set $\mathcal{F}$ of structures with complete
Gaifman graphs.\footnote{Two domain elements are adjacent in the Gaifman graph if they appear jointly in a relational tuple.}
Next, a class of finite structures $\mathcal{K}$ is defined by forbidding homomorphisms from all structures in $\mathcal{F}$.
Any such class $\mathcal{K}$ possesses two important structural properties: preservation under unions
(equivalently, the \emph{free amalgamation property}~\cite{bodirsky2021complexity}) and preservation under inverse homomorphisms.
 Via Fra\"{i}ss\'{e}'s theorem~\cite{hodges_book}, this is enough to ensure that $\mathcal K$ is the CSP of an $\omega$-categorical structure $\struct B$. 
 Solving the CSP of this structure $\struct B$ amounts to verifying  whether every complete substructure of an instance $\struct A$ is not an encoding of a word from the complement of $\mathcal{L}$. 
This technique therefore naturally yields problems complete for any complexity class of the
form $\mathrm{coNP}^{\mathcal{C}}$, while it is unlikely to produce completeness $\CSP(\struct B)$
for classes of the form $\mathrm{NP}^{\mathcal C}$. Indeed, to the best of our knowledge, no 
$\Sigma^\PO_{n}$-complete $\omega$-categorical CSP has appeared in the literature for any $n\geq 2$ 
In the present article, we close this gap. 
To this end, we follow an alternative construction of $\omega$-categorical CSPs
which, in contrast with the Bodirsky--Grohe construction, is not inherently tied to the
$\mathrm{coNP}^{\mathcal{C}}$ form, and we use it to exhibit examples of $\Sigma^\PO_{n}$-complete
$\omega$-categorical CSPs for $n\geq 2$.

\subsection{Contribution}

In 2020, Bodirsky, Kn\"{a}uer, and Rudolph~\cite{bodirsky2020datalog}  identified a new general sufficient condition for the existence of $\omega$-categorical CSPs.
Intuitively, this condition relaxes the preservation under unions to disjoint unions, compensated by an additional logical definability requirement.

\begin{theorem}[cf.~Corollary~17 in~\cite{bodirsky2020datalog}] \label{theorem:omega_cat_CSPs}
Let $\mathcal{K}$ be a class of finite structures that is preserved under inverse homomorphisms and disjoint unions and definable in monadic second-order logic.
Then there exists a countable $\omega$-categorical structure $\struct{D}$ such that $\mathcal{K} = \CSP(\struct{D})$.   
\end{theorem}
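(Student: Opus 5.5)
Since $\mathcal{K}$ is preserved under inverse homomorphisms and disjoint unions, it is closed under substructures, isomorphism and finite disjoint unions. The plan is to reduce to the free-amalgamation (Fra\"{i}ss\'{e}) setting by expanding structures with finitely many new relations that record ``local types'' coming from the MSO definition. We then take a Fra\"{i}ss\'{e} limit of the expanded class and let $\struct{D}$ be its reduct to $\tau$.

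\textbf{Step 1: a finite-index congruence for gluing.} Let $\Phi$ be an MSO sentence of quantifier rank $q$ defining $\mathcal{K}$. By Feferman--Vaught / composition arguments for MSO, the rank-$q$ MSO type of a disjoint union, and more generally of a gluing of two structures along a common finite substructure, is determined by the rank-$q$ types of the parts relative to the shared part. Since there are finitely many such types, this gives a finite amount of information per tuple of bounded arity.

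\textbf{Step 2: the expanded class $\mathcal{K}^+$.} Let $\mathcal{K}^+$ consist of expansions of structures $\struct{A}\in\mathcal{K}$ by relations $R_t$, one for each relevant type $t$. The intended meaning is that $R_t(\bar a)$ holds when $\bar a$ lies in some homomorphic image or extension of $\struct{A}$ in $\mathcal{K}$ witnessing type $t$. The precise definition is chosen as in Bodirsky--Kn\"{a}uer--Rudolph: fix a universal ``generic'' structure in $\mathcal{K}$ of finite bounded-arity description, and record which types are realizable over which tuples.

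We then verify three properties of $\mathcal{K}^+$:
\begin{enumerate}[label=(\roman*)]
\item it is closed under substructures;
\item it has amalgamation; for this we glue two expansions over a common part, and the recorded types together with Step 1 guarantee that the $\tau$-reduct of the glued structure still satisfies $\Phi$;
\item its $\tau$-reducts are exactly $\mathcal{K}$, using closure under disjoint unions and inverse homomorphisms to realize every structure.
\end{enumerate}
Fra\"{i}ss\'{e}'s theorem then yields a homogeneous structure in a finite relational signature, hence $\omega$-categorical. Its $\tau$-reduct $\struct{D}$ is $\omega$-categorical, and $\CSP(\struct{D})=\mathcal{K}$: every $\struct{A}\in\mathcal{K}$ embeds, and closure of $\mathcal{K}$ under inverse homomorphisms handles homomorphisms.

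\textbf{Main obstacle.} Amalgamation (ii) is the hardest step, since $\mathcal{K}$ is only closed under disjoint unions and not under gluing. The recorded types have to be exactly fine enough that gluing along a common substructure preserves membership in $\mathcal{K}$. My first attempt would be to use types of the form ``rank-$q$ MSO type of a structure in $\mathcal{K}$ containing the tuple, with its collapse pattern under homomorphisms''. I would then check compositionality of these types via Ehrenfeucht--Fra\"{i}ss\'{e} games.

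A cleaner alternative route is to appeal directly to the cited Corollary~17 of~\cite{bodirsky2020datalog}, which is precisely this statement. Inverse homomorphism closure plus MSO definability gives finitely many ``Hrushovski-style'' types, and the result then follows from their general theorem on classes with finite duality of types.
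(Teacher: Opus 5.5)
The paper gives no proof of this statement. It is imported from \cite[Corollary~17]{bodirsky2020datalog}, which, as the conclusion notes, goes back to \cite[Theorem~4.27]{martin2012scope}. Your closing suggestion to invoke that corollary is therefore exactly what the paper does.

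Your self-contained route in Steps~1--2, however, has a genuine gap, and it sits where you place the ``main obstacle''.
\begin{itemize}
\item You never define $\mathcal{K}^+$. Deferring the definition of the relations $R_t$ to \cite{bodirsky2020datalog} is circular.
\item Amalgamation (ii) is only asserted. Your informal reading of $R_t$ does not give a usable hereditary class, because the gluing behaviour of a tuple $\bar a$ in $\struct{A}$ changes when $\struct{A}$ is enlarged. So one would have to record an ``eventual'' type.
\item In an amalgam you must also assign recorded types to tuples that mix elements of both factors. You then have to show that all these assignments are realized coherently by a single member of $\mathcal{K}$. This is the entire content of the theorem, not a routine verification.
\item Your claim that the expanded signature is finite is unsupported. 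Step~1 only gives finitely many types per arity, and amalgamating over arbitrary finite common substructures appears to need tuples of every length. $\omega$-categoricity would then have to come from Ryll-Nardzewski (finitely many isomorphism types of $n$-element substructures for each $n$), not from finiteness of the signature.
\end{itemize}

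The cited result is organized so that your Step~1 is the easy half. It rests on a Myhill--Nerode-type criterion from \cite{martin2012scope}. For finite $n$-pointed structures, write $(\struct{A},\bar a)\sim_n(\struct{B},\bar b)$ if, for every $(\struct{C},\bar c)$, gluing $\struct{C}$ to $\struct{A}$ along $\bar c=\bar a$ yields a member of $\mathcal{K}$ iff gluing it to $\struct{B}$ along $\bar c=\bar b$ does. The criterion says: a class $\mathcal{K}$ closed under inverse homomorphisms and disjoint unions is the CSP of a countable $\omega$-categorical structure provided that, for every $n$, $\sim_n$ has finitely many classes.

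MSO-definability yields this finiteness through exactly the composition theorem of your Step~1. If $\Phi$ has quantifier rank $q$, the number of $\sim_n$-classes is at most the number of rank-$q$ MSO types of $n$-pointed $\tau$-structures. What your sketch lacks is the other implication, from finite index to an $\omega$-categorical template. Your Fra\"{i}ss\'{e} outline gestures at it but does not establish it, and the labels ``Hrushovski-style types'' and ``finite duality of types'' do not describe that mechanism.
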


We employ Theorem~\ref{theorem:omega_cat_CSPs} as a base construction method for showing that there exist $\omega$-categorical CSPs complete for any level of the \textrm{PH}. 
Our key technical contribution  concerns preservation properties of SO sentences (Section~\ref{section:syntactic}).
Firstly, we pin-point a syntactic condition, which we call \emph{$\forall$-restriction}, that characterizes SO sentences preserved under substructures. 
Combining \emph{$\forall$-restriction} with the standard notion of \emph{negativity}, we characterize SO sentences preserved under inverse
injective homomorphisms (Section~\ref{section:dual}).  
Secondly, we identify a syntactic transformation of SO sentences\textemdash which we call the ``CSP hammer''\textemdash that turns 
$\forall$-restricted negative MSO sentences into MSO sentences that are further preserved under inverse homomorphisms and disjoint unions (Section~\ref{subsect:hammer}).
This yields a general method for constructing $\omega$-categorical CSPs, which is fundamental for our applications in Sections~\ref{sect:odd_part} and~\ref{sect:even_part}.
There, we construct specific MSO sentences $\Phi_n$ with such syntactic properties, and suitable for the CSP hammer.
These sentences $\Phi_n$ are tailored so that their adjustment using the CSP hammer produces $\Sigma_n^{\mathrm{P}}$-complete $\omega$-categorical CSPs. 
On the way, we establish a previously unknown connection between \emph{bounded alternation Quantified CSPs} (QCSPs) and $\omega$-categorical CSPs---QCSPs will be shortly introduced.

\begin{theorem} \label{thm:qcsp_to_csp}
For every finite structure $\struct{B}$ and every quantifier-prefix
$\mathrm{Q}_1\cdots \mathrm{Q}_n$ with at least one universal quantifier,
there exists an $\omega$-categorical structure $\struct{D}$ such that:
\begin{itemize}
    \item $\CSP(\struct{D})$ is expressible in $\mathrm{Q}_1\cdots \mathrm{Q}_n$MSO;
    \item $\mathrm{Q}_1\cdots \mathrm{Q}_n$$\QCSP(\struct{B})$ reduces in polynomial-time 
to $\CSP(\struct{D})$. 
\end{itemize} 
\end{theorem}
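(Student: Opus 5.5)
The plan is to obtain $\struct{D}$ from Theorem~\ref{theorem:omega_cat_CSPs}. I will write an MSO sentence $\Phi$ with prefix $\mathrm{Q}_1\cdots\mathrm{Q}_n$ whose finite models form a class $\mathcal K$ preserved under inverse homomorphisms and disjoint unions, and into which $\mathrm{Q}_1\cdots\mathrm{Q}_n\QCSP(\struct{B})$ reduces in polynomial time.

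\textbf{The encoding.} An instance of $\mathrm{Q}_1\cdots\mathrm{Q}_n\QCSP(\struct{B})$ is a sentence $\mathrm{Q}_1\bar x_1\cdots\mathrm{Q}_n\bar x_n\,\psi$ with $\psi$ a conjunction of $\tau$-atoms. I encode it as a finite structure $\struct{A}$ in an enlarged signature. The signature contains the relations of $\tau$ and unary predicates $P_1,\dots,P_n$ marking which block a variable belongs to. For $b\in B$ it also has unary ``constant'' predicates, which I realise as gadgets. Assigning a block of variables to $B$ is coded by monadic second-order variables $X_{i,b}$ ($b\in B$) that partition $P_i$. The matrix of $\Phi$ then states that every atom $R(\bar v)$ of $\struct{A}$ is mapped into $R^{\struct{B}}$. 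Existential blocks become $\exists$ monadic blocks. A universal block $\forall \bar x_i$ becomes $\forall X_{i,b}$, with the implication ``if the $X_{i,b}$ partition $P_i$, then \dots''. This gives a sentence in $\mathrm{Q}_1\cdots\mathrm{Q}_n$MSO, because the first-order part can be absorbed into the innermost block. Correctness of the reduction $\struct{A}\models\Phi$ iff the QCSP instance holds is then immediate. The reduction is polynomial, since $\struct{B}$ is fixed.

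\textbf{The closure properties.} These are where the work lies, and this is where I use the CSP hammer of Section~\ref{subsect:hammer}. First I make $\Phi$ $\forall$-restricted and negative, so that it is preserved under inverse injective homomorphisms (Section~\ref{section:dual}). Negativity means the relation symbols occur only negatively in the matrix. This is achieved by writing the matrix as ``for all tuples, if $R(\bar v)$ then \dots'' and by guarding the first-order quantifiers with the $P_i$.

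Next I apply the hammer. Its job is to lift preservation from inverse injective homomorphisms to arbitrary inverse homomorphisms and to disjoint unions, while keeping the quantifier prefix. The expected difficulty is the following. A homomorphism may identify two universally quantified variables, and a disjoint union may split the instance into parts whose universal blocks interact. Either can change satisfaction in a non-monotone way. I expect the hammer to neutralise this by making the sentence trivially true on inputs that are not well-formed. In particular, identifying elements of a universal block should produce a structure that is a model by default, for example because the partition condition $X_{i,b}$ becomes unsatisfiable on merged elements. This is where the hypothesis that at least one universal quantifier appears should be used. With a universal block present there is a place to put the escape clause while still staying in $\mathrm{Q}_1\cdots\mathrm{Q}_n$. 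Without one, the purely existential case would be an ordinary NP CSP and is excluded.

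\textbf{The main obstacle.} I expect the hardest step to be verifying preservation under disjoint unions without increasing the alternation. For this I plan to relativise the sentence to each connected component in the hammer. The check then has to confirm that the resulting conjunction over components can be expressed with the same monadic prefix, by quantifying the $X_{i,b}$ globally and evaluating the matrix componentwise. Once $\Phi$ has all three properties, Theorem~\ref{theorem:omega_cat_CSPs} yields $\struct{D}$ with $\CSP(\struct{D})=\mathcal K$, which proves both items.
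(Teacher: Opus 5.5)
Your route is the paper's: encode with the same monadic prefix, verify that the encoding is $\forall$-restricted and negative, apply Lemma~\ref{lem:CSP_hammer}, then Theorem~\ref{theorem:omega_cat_CSPs}. The gap is that the step you treat as routine is where the real work lies, and the encoding you describe fails it.

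\textbf{The encoding does not meet the hammer's hypotheses.} Lemma~\ref{lem:CSP_hammer} needs the sentence to be syntactically $\forall$-restricted and negative. Its proof uses negativity in the surjective step and $\forall$-restriction for the uniform strategies in item~(3). Your escape clause ``the $X_{i,b}$ partition $P_i$'' includes covering. Its negation contributes the disjunct $\exists x\,(P_i(x)\wedge\bigwedge_{b}\neg X_{i,b}(x))$. There the input symbol $P_i$ occurs positively, so the sentence is not negative. The existential $x$ also has no positive atom of a universal SO variable, so it is not $\forall$-restricted. Requiring $X_{i,b}\subseteq P_i$ for existential blocks likewise puts $P_i$ positively.

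\textbf{The fix, built into the paper's $\Phi_{\struct{B}}$, is to split well-formedness by polarity.} For the UP, the escape clause should contain only violations of containment and disjointness. These are $\exists x\,(X_{i,b}(x)\wedge\neg P_i(x))$ and $\exists x\,(X_{i,b}(x)\wedge X_{i,c}(x))$, and both are harmless. Covering is enforced locally instead: every constraint clause gets the extra disjunct $\bigvee_k\bigwedge_b\neg X_{i_k,b}(x_k)$, over positions in universal blocks. A constraint containing an uncoloured universal variable then holds vacuously. The reduction then needs a small extra argument: the EP treats an uncoloured universal variable as carrying an arbitrary value. For the EP, impose covering $\forall x\,(\neg P_i(x)\vee\bigvee_b X_{i,b}(x))$ and disjointness, but not containment. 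Keep disjointness: without it the EP can give a variable two values and use a different one in each constraint. The paper's $\Psi_\exists$ only asks for covering, so disjointness should be added there too.

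\textbf{Your ``main obstacle'' is not one, and your guessed mechanism is backwards.} Once the hypotheses hold, Lemma~\ref{lem:CSP_hammer} gives disjoint unions and inverse homomorphisms with the prefix intact. No relativisation to connected components is involved. Making structures with merged universal elements models ``by default'' cannot give closure under inverse homomorphisms, which needs the preimage to be a model whenever the image is. The hammer does the opposite:
\begin{itemize}
\item It adds a fresh binary $N$ and a fresh universal unary $U$. $U$ is inserted into the first universal block, which is where the hypothesis on the prefix is used.
\item The matrix is checked only on $U$, and only when $U$ induces an $N$-clique.
\item Models have no $N$-loops, so a homomorphism into a model is injective on any such $U$. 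This reduces the problem to the injective case.
\item A $U$ meeting both summands of a disjoint union is not an $N$-clique.
\item The reduction expands the instance by $N:={\neq}$.
\end{itemize}
Finally, the ``constant'' gadgets are unnecessary, since qpp instances contain no constants.
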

Since there are finite-domain bounded alternation  
QCSPs  complete for the $\Sigma^{\PO}_{2n+1}$- and $\Pi^{\PO}_{2n}$-levels of the polynomial hierarchy, Theorem~\ref{thm:qcsp_to_csp} has the following consequence.
\begin{theorem} \label{thm:odd_part}
    For every $n \geq 1$, there are 
    $\Sigma_{2n+1}^\PO$- and $\Pi_{2n}^\PO$-complete $\omega$-categorical CSPs. 
\end{theorem}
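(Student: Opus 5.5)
Theorem~\ref{thm:odd_part} follows from Theorem~\ref{thm:qcsp_to_csp} combined with known hardness results for bounded-alternation QCSPs over finite templates. For the upper bound we use the matching between quantifier prefixes of MSO over finite structures and the PH. For the lower bound we use the polynomial-time reduction in Theorem~\ref{thm:qcsp_to_csp}.

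\emph{Step 1 (hard QCSPs).} Fix $n\ge 1$ and take $\struct{B}$ to be the two-element Boolean structure with the ternary relations encoding 3-clauses, i.e.\ the template of 3-SAT. The quantified 3-SAT problem with prefix $\exists\forall\cdots\exists$ having $2n+1$ alternating blocks and a CNF matrix is $\Sigma^\PO_{2n+1}$-complete. This is Stockmeyer--Wrathall; the innermost block is existential, so a CNF matrix suffices. Likewise, the prefix $\forall\exists\cdots\exists$ with $2n$ blocks and a CNF matrix is $\Pi^\PO_{2n}$-complete. Both are precisely $\mathrm{Q}_1\cdots\mathrm{Q}_m\QCSP(\struct{B})$ for the corresponding prefix. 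Since $n\geq 1$, each prefix contains a universal quantifier, so Theorem~\ref{thm:qcsp_to_csp} applies and yields an $\omega$-categorical $\struct{D}$. We must check that the QCSP convention of the paper (alternating blocks of quantifiers over variables of the instance) matches quantified 3-SAT. This is routine, since constants are not needed: the innermost existential block absorbs everything.

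\emph{Step 2 (membership).} By Theorem~\ref{thm:qcsp_to_csp}, $\CSP(\struct{D})$ is definable by an MSO sentence with prefix $\mathrm{Q}_1\cdots\mathrm{Q}_m$ of second-order set quantifiers followed by a first-order part. Given a finite input $\struct{A}$, each set quantifier ranges over subsets of $A$, which are polynomial-size certificates. The first-order kernel is evaluable in polynomial time for a fixed sentence. Hence the problem lies in $\Sigma^\PO_{2n+1}$ if $\mathrm{Q}_1=\exists$ with $2n+1$ alternations, and in $\Pi^\PO_{2n}$ if $\mathrm{Q}_1=\forall$ with $2n$ alternations. This is Stockmeyer's correspondence between SO prefixes and the PH.

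\emph{Step 3 (hardness).} The second item of Theorem~\ref{thm:qcsp_to_csp} gives a polynomial-time reduction from the complete QCSP to $\CSP(\struct{D})$. Together with Step 2, this gives completeness.

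The only genuine obstacle is bookkeeping: the prefix $\mathrm{Q}_1\cdots\mathrm{Q}_n$ must count alternating blocks rather than single quantifiers, in both the QCSP and the MSO sense, so that the alternation counts line up. If Theorem~\ref{thm:qcsp_to_csp} counts single quantifiers, we collapse consecutive equal quantifiers into blocks, which is harmless on both sides.
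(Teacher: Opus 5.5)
Your proposal is correct and follows the paper's proof. Like the paper, you apply Theorem~\ref{thm:qcsp_to_csp} to the Boolean 3-CNF template with prefixes $(\exists\forall)^n\exists$ and $(\forall\exists)^n$, read off membership from the MSO quantifier prefix (a step the paper leaves implicit), and get hardness from the polynomial-time reduction. Your block count $(\forall\exists)^n$ for the $\Pi^\PO_{2n}$ case is also the right one, whereas the paper's written prefix $\forall(\exists\forall)^n\exists$ has $2n+2$ blocks, an off-by-one slip.
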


We cannot use Theorem~\ref{thm:qcsp_to_csp} to obtain complete problems for the
$\Sigma^\PO_{2n}$- or $\Pi^\PO_{2n+1}$-levels  of the \textrm{PH}
because finite-domain  $(\exists\forall )^n$QCSPs cannot ever 
be $\Sigma^\PO_{2n}$- or $\Pi^\PO_{2n+1}$-complete. 
We then prove the following analogue of Theorem~\ref{thm:odd_part} for the
remaining levels.
\begin{theorem} \label{thm:even_part}
      For every $n \geq 1$, there are $\Sigma_{2n}^\PO$-  and $\Pi_{2n+1}^\PO$-complete
      $\omega$-categorical CSPs. 
\end{theorem}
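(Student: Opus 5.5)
I would prove Theorem~\ref{thm:even_part} by combining the method behind Theorem~\ref{thm:qcsp_to_csp} with a different source of hardness. The plan is to write down, for each $n\geq 1$, a $\forall$-restricted negative MSO sentence $\Phi_n$ whose quantifier prefix is $(\exists\forall)^n$ over monadic predicates. Its finite models should encode a $\Sigma^\PO_{2n}$-complete problem. Applying the CSP hammer to $\Phi_n$ then gives an MSO sentence preserved under inverse homomorphisms and disjoint unions, and Theorem~\ref{theorem:omega_cat_CSPs} turns it into an $\omega$-categorical $\struct D$.

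The upper bound is immediate once the hammered sentence keeps the prefix $(\exists\forall)^n$MSO with a first-order matrix. Model checking such a sentence on a finite structure is in $\Sigma^\PO_{2n}$, by guessing or universally branching over the monadic predicates in order. The $\Pi^\PO_{2n+1}$ case should be analogous, using prefix $\forall(\exists\forall)^n$. I will also need at least one universal quantifier, for the same reason as in Theorem~\ref{thm:qcsp_to_csp}.

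For hardness, I would not start from a finite-domain QCSP, because those cannot be $\Sigma^\PO_{2n}$-complete. Instead I would reduce from $\Sigma_{2n}$-QBF, i.e.\ $\exists\bar x_1\forall\bar x_2\cdots\forall\bar x_{2n}\,\psi$ with $\psi$ in 3-DNF; a 3-CNF matrix would collapse the last level. The encoding would be a structure with a unary relation per quantifier block marking its variables, a relation for the DNF terms with polarities, and a set-quantifier for each block assigning truth values. The innermost universal set-quantifier is where the extra alternation is gained over QCSPs. The DNF matrix must become a first-order condition that is negative and $\forall$-restricted, meaning universally guarded by the instance relations. Concretely I would express ``$\neg\psi$ under the chosen assignment'' as a universal statement over term tuples in the matrix, rather than relying on the finite template absorbing the matrix.

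The main obstacle is keeping correctness under the hammer. The hammer makes the class closed under inverse homomorphisms, so instances that are homomorphic preimages of non-encodings can also be accepted. The reduction's images must therefore be tight enough that their homomorphic images do not accidentally satisfy the hammered sentence. I would rely on what Section~\ref{subsect:hammer} presumably guarantees: on structures that are cores or well-formed encodings, the hammered sentence agrees with the original $\Phi_n$. I would then check that the QBF encodings are such structures, for instance by adding a rigid variable-ordering gadget so that nontrivial collapses are detected. Complements of these sentences do not directly give the $\Pi$ levels, since CSPs are not closed under complement. So I would repeat the construction with the prefix $\forall(\exists\forall)^n$ and a $\Pi_{2n+1}$-QBF with CNF matrix, with the same obstacle.
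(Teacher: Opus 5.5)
\textbf{There is a genuine gap: the construction of the sentence is missing.} Your architecture is the paper's: a $\forall$-restricted negative MSO sentence with prefix $(\exists\forall)^n$, then the CSP hammer, then Theorem~\ref{theorem:omega_cat_CSPs}. But the step that carries all the content, namely an actual such sentence into which a $\Sigma_{2n}^\PO$-complete problem reduces, is not supplied. Worse, the encoding you sketch cannot supply it.

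Suppose the DNF terms are tuples of an instance relation, as you propose. Take the true instance $\phi=\exists x\forall y\,((x\wedge y)\vee(x\wedge\neg y))$ and the false instance $\phi'=\exists x\forall y\,(x\wedge y)$, obtained by deleting one term. Their encodings have the same domain, and the identity map is an injective homomorphism $\struct{A}_{\phi'}\to\struct{A}_{\phi}$. So any $\forall$-restricted negative $\Phi_n$ that accepts $\struct{A}_{\phi}$ also accepts $\struct{A}_{\phi'}$ (Lemma~\ref{thm:exist_guarded_negative}), and the reduction is wrong.

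Put differently, ``some term is true'' puts the term relation in positive position. The universal statement over term tuples you propose expresses $\neg\psi$ instead. Using it as the existential player's goal amounts to evaluating a CNF under a final universal block, which collapses, as you yourself observed.

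$\forall$-restriction imposes a second constraint you do not address. It forces closure under substructures, so deleting elements must never turn a yes-instance into a no-instance. Note also that $\forall$-restriction guards existential FO variables by positive occurrences of \emph{universally quantified SO variables}, not by instance relations. In a negative sentence, instance relations may not occur positively at all.

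The missing idea in the paper is to dualize. It reduces from the complement of $(\forall\exists)^n$3-CNF and builds an $\exists$-guarded \emph{positive} sentence $\Phi_\ast$ with prefix $(\forall\exists)^n\exists V$. The domain consists of literal occurrences $(\lambda,i)$. Every instance relation is arranged so that adding tuples only helps the existential player: for example, $R$ lists \emph{non}-complementary pairs, and $S,\succ,T$ form a chain through the clauses. The extra variable $V$ must pick a consistent set of true occurrences running through all clauses, and this check is relativized to $V$ via $\Psi_{|V}$.

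Negating gives a $\forall$-restricted negative sentence with prefix $(\exists\forall)^n\forall$; the $\forall V$ merges into the last universal block. So the DNF matrix of the $\Sigma_{2n}^\PO$ problem is evaluated as ``every candidate witness $V$ fails''. This is the device that reconciles the DNF matrix with negativity and closure under substructures, and your plan has nothing playing its role.

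Two smaller corrections:
\begin{enumerate}
\item What you call the main obstacle is not one. The proof of Lemma~\ref{lem:CSP_hammer}(3) gives $\struct{A}\models\Phi$ iff $(\struct{A},{\neq})\models\Phi^{\mathrm{CSP}}$ for \emph{every} $\struct{A}$, so no cores or rigid ordering gadgets are needed.
\item For $\Pi_{2n+1}^\PO$, a $\forall(\exists\forall)^n$ QBF with a CNF matrix ends in a universal block and collapses to $\Pi_{2n}^\PO$, exactly as in your $\Sigma$ case. You need a DNF matrix, i.e.\ the complement of $\exists(\forall\exists)^n$3-CNF, which is what the paper uses.
\end{enumerate}
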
 
The existence of $\Pi_n^{\PO}$-complete $\omega$-categorical CSPs
was originally proved in~\cite{bodirsky2008non}. However, our methodology
is robust enough to naturally yield $\Pi_n^{\PO}$-complete $\omega$-categorical CSPs while
constructing $\Sigma_n^{\PO}$-complete
$\omega$-categorical CSPs.  
These results, and the fact that there are
\textrm{NP}-complete finite-domain CSPs (e.g., 3-\textsc{Sat}), together prove our main theorem.
 \begin{theorem}
    \label{thm:completeness}
For every level $\Sigma_{n}^\PO$ or $\Pi_{n}^\PO$ $(n\geq 1)$ in the polynomial hierarchy, there exists an $\omega$-categorical structure $\struct{D}$ such that $\CSP(\struct{D})$ is complete for that level. 
 \end{theorem}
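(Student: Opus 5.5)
The proof is a case analysis on the level, assembling the preceding results. Write the level as $\Sigma_n^\PO$ or $\Pi_n^\PO$ with $n\geq 1$, and split by the parity of $n$.

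\emph{Case $\Sigma_1^\PO=\mathrm{NP}$.} Take a finite structure whose CSP is NP-complete, for instance the standard template of 3-\textsc{Sat}. It is finite and hence trivially $\omega$-categorical, so this case needs no new construction.

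\emph{Case $\Pi_1^\PO=\mathrm{coNP}$.} Invoke the result of Bodirsky and Grohe~\cite{bodirsky2008non} recalled in the introduction, with $\mathcal C$ trivial, say $\mathcal C=\mathrm{P}$, so that $\mathrm{coNP}^{\mathcal C}=\mathrm{coNP}$. It yields an $\omega$-categorical structure with a coNP-complete CSP. Alternatively, one could apply Theorem~\ref{thm:qcsp_to_csp} with the prefix $\forall$ to a finite $\struct B$ whose $\forall$QCSP is coNP-complete. The Bodirsky--Grohe route is the safer citation.

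\emph{Case $n\geq 2$.} If $n=2m+1$ is odd with $m\geq 1$, Theorem~\ref{thm:odd_part} gives $\Sigma_n^\PO$-completeness, and Theorem~\ref{thm:even_part} gives $\Pi_n^\PO=\Pi_{2m+1}^\PO$-completeness. If $n=2m$ is even with $m\geq 1$, Theorem~\ref{thm:even_part} gives $\Sigma_{2m}^\PO$, and Theorem~\ref{thm:odd_part} gives $\Pi_{2m}^\PO$. Together these cover every level with $n\geq 2$, and the earlier cases cover $n=1$.

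The only step needing care is the bookkeeping: checking that the index ranges of Theorems~\ref{thm:odd_part} and~\ref{thm:even_part}, both stated for $n\geq1$, jointly cover all levels from $2$ upward in both the $\Sigma$ and $\Pi$ versions. We also need the $n=1$ levels to be handled separately as above. The real mathematical content lies in Theorems~\ref{thm:odd_part} and~\ref{thm:even_part}, which we take as given here.
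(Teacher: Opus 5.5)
Your proof is correct and matches the paper's own assembly: 3-\textsc{Sat} for $\Sigma_1^\PO$, the Bodirsky--Grohe result for $\Pi_1^\PO$, and Theorems~\ref{thm:odd_part} and~\ref{thm:even_part} for all levels with $n\geq 2$; your parity bookkeeping checks out. One correction: drop the suggested alternative for $\Pi_1^\PO$ via Theorem~\ref{thm:qcsp_to_csp} with prefix $\forall$, because no finite $\struct{B}$ has a coNP-complete $\forall$QCSP$(\struct B)$ — the universal quantifiers distribute over the conjunction of constraints, so each bounded-arity constraint can be checked separately against the fixed finite $\struct{B}$ in polynomial time.
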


\section{Preliminaries}\label{sect:prelims}  

We use the bar notation $\bar .$ for tuples, and denote the set $\{1,\dots,n\}$ by $[n]$. 
We extend the membership relation on sets to tuples by ignoring the ordering on the entries. For example, we might write $x\in \bar{x}$ to denote that $x$ is an entry in $\bar{x}$. 
 
\subsection{Structures.} 
A (\emph{relational}) \emph{signature} $\tau$ is a set of \emph{relation symbols}, where each $R\in\tau$ is associated with a natural number called \emph{arity}.
A (\emph{relational}) \emph{$\tau$-structure} $\struct{A}$ consists of a set $A$ (the \emph{domain}) together with the relations $R^{\struct{A}}\subseteq A^{k}$ for each $R\in \tau$ with arity $k$.
An \emph{expansion} of $\struct{A}$ is a $\sigma$-structure $\struct{A}^+$ with $A=A^+$ such that $ \tau\subseteq \sigma$ and $R^{\struct{A}^+}=R^{\struct{A}}$ for each relation symbol $R\in \tau$. Conversely, we then call $\struct{A}$ a \emph{reduct} of $\struct{A}^+$.
We denote by $(\struct{A},R)$ the expansion of a structure $\struct{A}$ by a relation $R$ over its domain. 
The \emph{substructure} of a $\tau$-structure $\struct{B}$ on a subset $A\subseteq B$ is the $\tau$-structure $\struct{A}$ with domain $A$ and relations $R^{\struct{A}}=R^{\struct{B}}\cap A^k$ for every $R\in \tau$ of arity $k$. 
Conversely, we call $\struct{B}$ a \emph{superstructure} of $\struct{A}$.
%
%
The \emph{union} of two $\tau$-structures $\struct{A}$ and $\struct{B}$ is the $\tau$-structure $\struct{A}\cup \struct{B}$ with domain $A\cup B$ and relations $R^{\struct{A}\cup \struct{B}}\coloneqq R^{\struct{A}}\cup R^{\struct{B}}$ for every $R\in \tau$.
If $A\cap B = \emptyset$, we call $\struct{A}\cup \struct{B}$ a \emph{disjoint union} and write $\struct{A}+\struct{B}$. 
%
%
%
A \emph{homomorphism} $h\colon \struct{A} \rightarrow \struct{B}$ for $\tau$-structures $\struct{A},\struct{B}$ is a mapping $h\colon  A\rightarrow B$ that \emph{preserves} each relation of $\tau$, i.e., whenever $ \bar{t} \in R^{\struct{A}}$ for some  relation symbol $R\in \tau$, then $h(\bar{t})$ (computed componentwise)  is an element of $R^{\struct{B}}$.
We write $\struct{A} \rightarrow \struct{B}$ if $\struct{A}$ maps homomorphically into $\struct{B}$.  
A class $\mathcal{K}$ of finite structures is \emph{preserved} or \emph{closed} under:
\begin{itemize}
    \item \emph{homomorphisms} if, for every $\struct{A} \in \mathcal{K}$, every finite structure $\struct{B}$ with a homomorphism from $\struct{A}$ is also contained in $\mathcal{K}$;
    \item \emph{inverse homomorphisms} if, for every $\struct{B} \in \mathcal{K}$, every finite structure $\struct{A}$ with a homomorphism to $\struct{B}$ is also contained in $\mathcal{K}$;
    \item \emph{disjoint unions} if, for all $\struct{A}, \struct{B}\in \mathcal{K}$, the union of two isomorphic copies of $\struct{A}$ and $\struct{B}$ with disjoint domains is also contained in $\mathcal{K}$.
\end{itemize}

\subsection{Logic.} 
We assume that the reader is familiar with classical \emph{First-Order} logic FO.   
A first-order $\tau$-formula $\phi$ is \emph{primitive positive} (pp) if it is of the form $\exists x_{1},\dots,x_{m}  (\phi_{1}\wedge \dots \wedge \phi_{n})$, where each $\phi_{i}$ is \emph{atomic}, i.e., of the form $\bot$, $(x_{j}=x_{k})$, or $R(x_{i_{1}},\dots,x_{i_{\ell}})$ for some $R\in \tau$. 
\emph{Quantified primitive positive} (qpp) formulas generalize pp-formulas by allowing both existential \emph{and} universal quantification. 
We use the stylized letter $\mathrm{Q}$ for the quantifiers $\exists$ and $\forall$; given $\mathrm{Q}\in \{\exists,\forall\}$, we denote the complementary quantifier by $\overline{\mathrm{Q}}$.

\emph{Second-Order logic} SO is the extension of FO which additionally allows existential and universal quantification over relations; that is, if $R$ is a relation symbol and $\phi$ is a SO $\tau\cup \{R\}$-formula, then $\exists R \ldotp \phi$ and $\forall R \ldotp \phi$ are SO $\tau$-formulas. 
\emph{Monadic Second-Order logic} MSO is the restriction of SO to unary SO variables.
In $\exists R \ldotp \phi$ and $\forall R \ldotp \phi$, we then refer to $R$ as a \emph{SO variable}. 
The \emph{model-checking problem} for a fixed $\tau$-formula $\phi$ is the computational problem of deciding whether $\phi$ is satisfiable in a given finite $\tau$-structure $\struct{A}$.

We say that a formula $\phi$ is \emph{$k$-ary} if it has $k$ free FO variables; we use the notation $\phi(\bar{x})$ to indicate that the free FO variables of $\phi$ are among $\bar{x}$.
This does not mean that the truth value of $\phi$ depends on each entry in $\bar{x}$. 

A \emph{sentence} is a formula without free FO variables; we reserve uppercase letters for sentences and lowercase for formulas.
We use $\fm(\Phi)$ to denote the class of all finite models of a sentence $\Phi$; when we say that $\Phi$ is closed under (inverse) homomorphisms or disjoint unions, we mean that $\fm(\Phi)$ has the respective property.

We say that a formula $\phi$ is \emph{normalized} if it is in prenex normal form.
By this we mean that $\phi$ starts with an SO quantifier-prefix, followed by an FO quantifier-prefix, and ends with a quantifier-free formula.
We say that $\phi$ is \emph{CNF-} or \emph{DNF-normalized} if it is normalized and its quantifier-free part is in \emph{Conjunctive Normal Form} (CNF) or \emph{Disjunctive Normal Form} (DNF), respectively.
With a simple inductive argument over the definition of SO, one can verify that every SO formula is logically equivalent to a CNF- and a DNF-normalized one.

We say that a class $\mathcal{L}$ of SO formulas is \emph{closed under internal conjunctions and disjunctions} if the following holds:
if $\phi_1$ and $\phi_2$ are $\tau\cup \{S_1,\dots, S_n\}$-formulas and $\mathrm{Q}_1,\dots,\mathrm{Q}_n$ quantifiers such that the $\tau$-formulas $\mathrm{Q}_1 S_1 \cdots \mathrm{Q}_n S_n \ldotp \phi_1$ and $\mathrm{Q}_1 S_1 \cdots \mathrm{Q}_n S_n \ldotp \phi_2$ are contained in $\mathcal{L}$, then
$\mathrm{Q}_1 S_1 \cdots \mathrm{Q}_n S_n\big( \phi_1 \wedge \phi_2\big)$ and $\mathrm{Q}_1 S_1 \cdots \mathrm{Q}_n S_n\big( \phi_1 \vee \phi_2\big)$ are contained in $\mathcal{L}$ as well. 
For a CNF- or DNF-normalized SO formula $\phi$, we say that $\phi$ is \emph{positive} (\emph{negative}) if atomic $\tau$-formulas of the form $R(\bar{x})$ for $R\in \tau$ may only appear positively (negatively) in clauses of $\phi$, i.e., every negative (positive) atom contains a SO variable or the default equality predicate.  
We \emph{naturally} extend this definition to normalized SO formulas $\phi$ by instead requiring the condition to be satisfied by some formula $\phi'$ obtained from $\phi$ by converting the quantifier-free part of $\phi$ into CNF using the De Morgan's laws and the distributive law of propositional logic.
Note that we could have equivalently used DNF instead of CNF.
Finally, we define the sets of \emph{positive} (\emph{negative}) \emph{SO formulas} as the closure of positive (negative) normalized SO formulas under internal conjunctions and disjunctions. 
Alternatively, we could define positivity and negativity in terms of $\tau$-atoms being under the scope of evenly and oddly many negations, respectively. 
However, the current definition is more suitable for our applications.
\begin{example} \label{ex:acyclic}
The following MSO $\{E\}$-sentence $\Phi$, testing whether a binary relation $E$ contains a  directed cycle, is positive:
\begin{align*}
     \exists S \, \forall C \big(\exists x\ldotp E(x,x)\big) \vee \Big(\big(\exists x,y \ldotp S(x) \wedge S(y) \wedge \neg (x = y) \big) \\ {} \wedge \Big(    \big(\exists a,b   \ldotp    S(a) \wedge S(b) \wedge C(a)   \wedge E(a,b)  \wedge \neg C(b) \big)  \\ {}  \vee\big(
     \forall x,y\ldotp  \neg S(x) \vee \neg S(y) \vee   \neg C(x)  \vee C(y)
    \big)\Big) \Big).
\end{align*} 
%
%
Negating $\Phi$ and applying De Morgan's laws instead yields a negative sentence.
\end{example}  
 
It is useful to take the following game-theoretic perspective at SO.
The evaluation of a SO $\tau$-formula $\phi$ on a structure $\struct{A}$ corresponds to a game between two players: an \emph{existential player} (EP) and a \emph{universal player} (UP) who assign values to the existentially and universally quantified variables, respectively.
To every moment of the game we associate a partial function $\ass{\cdot}$   describing values assigned to the variables of $\phi$ by either of the players:
\begin{itemize}
    \item if $x$ is a FO variable, then $\ass{x} $ is either undefined or an element of $A$; 
    \item if $S$ is a SO variable of arity $k$, then $\ass{S} $ is either undefined or a subset of $A^k$. 
\end{itemize}  
The goal of the EP is to satisfy the quantifier-free part of $\phi$ evaluated at $\bar a$; the goal of the UP is the opposite.
Clearly, $\struct{A}\models \phi(\bar{a})$ if  the EP has a winning strategy in this game, i.e., can respond to all moves of the UP while keeping the quantifier-free part of $\phi$ evaluated at $\bar a$ satisfied. 
Otherwise, $\struct{A} \centernot{\models} \phi(\bar{a})$ and the UP has a winning strategy, i.e., can violate the quantifier-free part of $\phi$ evaluated at $\bar a$ regardless of the moves of the EP.

\subsection{Constraint satisfaction problems.}
The \emph{(Quantified) Constraint Satisfaction Problem} for a structure $\struct{B}$, denoted by $\mathrm{(Q)}\CSP(\struct{B})$, is the computational problem of deciding whether a given 
(q)pp $\tau$-sentence holds in $\struct{B}$. 
By \emph{constraints}, we refer to the conjuncts in the quantifier-free part of a given instance of $\mathrm{(Q)}\CSP(\struct{B})$.
In the CSP framework, we also consider the homomorphism perspective, where an instance is viewed as a finite $\tau$-structure.
Then $\CSP(\struct{B})$ is (the membership problem for) the class of all finite structures $\struct{A}$ which homomorphically map into $\struct{B}$.
The two definitions are equivalent~\cite{10.1145/800105.803397}. 
%
%
\emph{Bounded alternation} QCSPs are QCSPs restricted to a fixed quantifier prefix $\mathrm{Q}_1\dots \mathrm{Q}_n$.
In particular, CSPs are bounded alternation QCSPs restricted to the quantifier prefix $\exists$.
Finite-domain \emph{bounded alternation} QCSPs have been investigated in~\cite{chen2009existentially}.

We usually think of an instance of a QCSP as a game between two
players: an \emph{existential player} (EP) and a \emph{universal player} (UP) who assign values to the existentially and universally quantified variables, respectively.
To every moment of the game we associate a partial function $\ass{\cdot}$ from the variables into the domain of the parametrizing structure $\struct{B}$ describing values assigned to the variables by either of the players.
The instance is true if and only if the EP has a winning strategy in this game, i.e., can respond to all moves of the UP while keeping all constraints satisfied. Otherwise, the instance is false and the UP has a winning strategy, i.e., can violate a constraint regardless of the moves of the EP. 

\subsection{3-CNF}
The \emph{satisfiability problem} for 3-CNF asks whether a given propositional formula in conjunctive normal form with at most three variables per clause has a satisfying assignment. 
It can naturally be represented as a finite-domain CSP by interpreting each clause as a ternary relational constraint over $\{0,1\}$, e.g., $(\neg x_i \vee x_j \vee x_k)$ as $R_{100}(x_i,x_j,x_k)$, where $R_{100}\coloneqq \{0,1\}^3 \setminus \{(1,0,0)\}.$
Allowing existential \emph{and} universal quantification over the input variables yields the computational decision problem \textsc{Quantified 3-CNF}.

\section{Preservation lemma}  \label{section:syntactic}
 
We begin this section by giving a syntactic characterization of the fragment of SO closed under injective homomorphisms (Lemma~\ref{thm:exist_guarded_positive}).  
We then dualize the said syntactic characterization to obtain a one that characterizes the fragment of SO closed under inverse injective
homomorphisms (Lemma~\ref{thm:exist_guarded_negative}). 
In Section~\ref{subsect:hammer}, we convert the dualized syntactic characterization into a tool for producing SO sentences closed under inverse homomorphisms and disjoint unions. 
This tool will be used in Sections~\ref{sect:odd_part} and~\ref{sect:even_part} to prove our main results, Theorems~\ref{thm:odd_part} and~\ref{thm:even_part}.

\subsection{Injective homomorphisms}

In Lemma~\ref{thm:exist_guarded_positive}, we give a syntactic condition for SO sentences which implies closure under injective homomorphisms.
Bodirsky, Feller, Kn\"{a}uer, and Rudolph~\cite[Theorem~59]{bodirsky2021logics} recently identified a syntactic transformation $\Phi \mapsto (\Phi^{\mathrm{shom}})^{\mathrm{sup}}$ for SO sentences with the property that $\fm(\Phi)=\fm((\Phi^{\mathrm{shom}})^{\mathrm{sup}})$ if and only if $\Phi$ is preserved under homomorphisms. 
We can show that, for every CNF-normalized SO sentence $\Phi$, the sentence  $(\Phi^{\mathrm{shom}})^{\mathrm{sup}}$ satisfies our syntactic condition for preservation under injective homomorphisms, and hence our condition captures the fragment of SO closed under injective homomorphisms up to logical equivalence.  
 \begin{definition} \label{def:exist_guard}
     A CNF-normalized SO sentence $\Phi$ is \emph{$\exists$-guarded} if, for every clause $\phi$ in $\Phi$ and every universally quantified FO variable $x$ in $\phi$, there is a disjunct $\neg S(\bar{x})$ in $\phi$ for some existentially quantified SO variable $S$  such that $x \in \bar x$. 
     We naturally extend this definition to normalized SO sentences.
 The set of \emph{$\exists$-guarded SO sentences} is  the closure of $\exists$-guarded normalized SO sentences under internal conjunctions and disjunctions.  
 \end{definition}

It is easy to verify that the sentence in Example~\ref{ex:acyclic} is $\exists$-guarded; we only need to break it down to its CNF-normalized components and apply Definition~\ref{def:exist_guard} inductively.

Intuitively, the condition of $\exists$-guarding allows the EP to restrict the game of  
evaluating a SO sentence in a structure $\struct A$ to any particular substructure $\struct B$; all the EP has to do is to evaluate 
every existentially quantified SO variable within $B$. 
This reasoning can be used to show that every $\exists$-guarded SO sentence is preserved under superstructures,
because if  $\struct B$ satisfies the SO sentence, then the EP can choose to restrict the game on $\struct B$.
Corollary~55 in~\cite{bodirsky2021logics} provides a transformation $\Phi \mapsto \Phi^{\mathrm{sup}}$ such that $\fm(\Phi)=\fm(\Phi^{\mathrm{sup}})$ if and only if $\Phi$ is preserved under superstructures.
The following lemma extends~\cite[Corollary~55]{bodirsky2021logics} to a syntactic characterization of the SO-fragment closed under superstructures.
 
\begin{restatable}{lemma}{exguarded}   \label{lem:ex_guarded}
    For a SO sentence $\Phi$, the following are equivalent:
 \begin{enumerate}
     \item \label{item:ex1} $\Phi$ is preserved under superstructures.
     \item \label{item:ex2} $\Phi$ is logically equivalent to an $\exists$-guarded SO sentence.
 \end{enumerate} 
 
\end{restatable}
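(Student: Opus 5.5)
The equivalence has two directions. For (\ref{item:ex2})$\Rightarrow$(\ref{item:ex1}) we need a semantic argument. For (\ref{item:ex1})$\Rightarrow$(\ref{item:ex2}) we rely on the transformation $\Phi\mapsto\Phi^{\mathrm{sup}}$ from~\cite[Corollary~55]{bodirsky2021logics}, together with a check that its output can be brought into $\exists$-guarded form.

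For (\ref{item:ex2})$\Rightarrow$(\ref{item:ex1}), first note that preservation under superstructures is stable under internal conjunctions and disjunctions. This needs care, because such combinations share the SO prefix. So I would not argue via ``the class is closed under $\wedge,\vee$''. Instead I would prove a stronger game-theoretic statement by induction on the construction of $\exists$-guarded sentences.

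Fix a substructure $\struct{B}$ of $\struct{A}$. Play the game on $\struct{A}$ while simulating a game on $\struct{B}$ under a fixed EP winning strategy there. The EP interprets every existential SO variable exactly as in $\struct{B}$, so its relations lie inside $B^k$. A universal SO move $\ass{S}\subseteq A^k$ is answered in the simulated game by $\ass{S}\cap B^k$.

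At the FO level, an existential FO variable receives the same value as in $\struct{B}$. A universal FO variable $x$ chosen outside $B$ makes the clause's guard $\neg S(\bar x)$ true, since $\ass{S}\subseteq B^k$ and $x\in\bar x$. So every clause containing $x$ is satisfied. If all universal FO variables land in $B$, every atom evaluates identically in both games. For $\tau$-atoms and equalities this is because $\struct{B}$ is an induced substructure; for universal SO variables it is because the tuple lies in $B^k$.

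This handles a single CNF-normalized sentence. For internal combinations $\mathrm{Q}_1S_1\cdots\mathrm{Q}_nS_n(\phi_1\circ\phi_2)$, I would apply the same simulation on the SO prefix. After the SO moves, the FO parts of $\phi_1,\phi_2$ are each won or lost identically in both structures. This is an invariant that can be stated per FO part, and it is closed under $\wedge,\vee$. Inductively, then, the invariant is: for the strategy described, the truth of the FO matrix in $\struct{A}$ is at least that in $\struct{B}$.

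For (\ref{item:ex1})$\Rightarrow$(\ref{item:ex2}), assume $\Phi$ is preserved under superstructures. Then $\fm(\Phi)=\fm(\Phi^{\mathrm{sup}})$ by~\cite[Corollary~55]{bodirsky2021logics}. Presumably $\Phi^{\mathrm{sup}}$ has the shape $\exists U$ followed by $\Phi$ relativized to $U$. In the relativization, FO quantifiers become $\forall x(\neg U(x)\vee\cdots)$ and $\exists x(U(x)\wedge\cdots)$, and SO quantifiers range over subsets of $U$. The remaining step is to normalize this relativization so that it becomes $\exists$-guarded.

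Pushing the relativized FO quantifiers to prenex form and converting to CNF distributes the disjunct $\neg U(x)$ into every clause containing $x$, where $x$ is universal. Since $U$ is existentially quantified, each such clause is guarded. A universal SO variable $S$ relativized to $U$ is handled by replacing atoms $S(\bar y)$ with $S(\bar y)\wedge\bigwedge_i U(y_i)$. This keeps CNF-convertibility and adds only positive $U$-atoms, which do not hurt guardedness. Existential FO variables need no guard.

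The main obstacle is precisely this normalization. Prenexing may produce a disjunction of differently-prefixed pieces, so we must verify that the result falls within the closure of guarded normalized sentences under internal conjunctions and disjunctions. It may be necessary to redo the relativization directly at the level of CNF-normalized components rather than treat $\Phi^{\mathrm{sup}}$ as a black box.
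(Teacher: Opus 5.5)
Your argument for (2)$\Rightarrow$(1) is the paper's. Existential SO variables are copied from the winning play on the substructure, so they stay inside $B^k$. Universal SO moves are intersected with $B^k$, and the guard $\neg S(\bar x)$ absorbs any universal FO move outside $B$. You should add, as the paper does, that the simulated game treats such a move as an arbitrary element of $B$. Then the existential player can keep answering, and clauses not containing that variable evaluate identically. Your inductive invariant for internal combinations is a more careful version of what the paper leaves implicit.

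The gap is in (1)$\Rightarrow$(2). You borrow $\mathrm{fm}(\Phi)=\mathrm{fm}(\Phi^{\mathrm{sup}})$ from the citation, but you apply it to a sentence whose shape you only guess, and you leave its normalization open. As guessed, that sentence is not equivalent to $\Phi$, because nothing forces $U\neq\varnothing$. Over non-empty structures (the paper's convention, cf.\ the ``non-empty substructure'' in its proof), $\forall x\,\neg(x=x)$ is trivially preserved under superstructures. Yet $\exists U$ followed by its relativization holds in every structure, via $U=\varnothing$.

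The paper avoids the black box and builds the sentence directly from a CNF-normalized $\Phi$:
\begin{enumerate}
\item prepend a fresh existential SO variable $S$;
\item prepend an existential FO variable $z$ together with the clause $S(z)$;
\item for each existential FO variable $y$, add the clause consisting of $S(y)$ and $\neg S(x)$ for every universal FO variable $x$;
\item add $\neg S(x)$, for every universal $x$, to every original clause.
\end{enumerate}
The result is CNF-normalized and $\exists$-guarded by inspection. Universal SO variables need no relativization, since in every relevant play all FO variables lie in $S$. One then checks directly that $\struct{A}\models\Phi^{\mathrm{sup}}$ iff some non-empty substructure of $\struct{A}$ satisfies $\Phi$, and preservation under superstructures gives the equivalence.

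Your worry about differently-prefixed pieces does not arise. Relativizing a prenex sentence keeps its prefix over non-empty domains, since $\neg U(x)\vee\exists y\,\chi\equiv\exists y\,(\neg U(x)\vee\chi)$ and $U(y)\wedge\forall x\,\chi\equiv\forall x\,(U(y)\wedge\chi)$. The CNF expansion then yields precisely guarded clauses of the kind listed above.
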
   
\begin{proof}   We only prove the statement for normalized SO sentences; it will become clear that the argument extends to the full inductive definition. Let $\tau$ be the signature of $\Phi$.  

``\eqref{item:ex2}$\Rightarrow$\eqref{item:ex1}'' 
We may assume that $\Phi$ itself is CNF-normalized; otherwise we replace its quantifier-free part with the CNF version of it witnessing $\exists$-guarding. 
%
Suppose that $\struct{A} \models \Phi$ and $\struct{B}$ is a superstructure of
$\struct{A}$.
We claim that the EP has the following winning strategy on
$\struct{B}$ based on the winning strategy on $\struct{A}$.
For a SO variable $S$:
    \begin{itemize}
        \item if $S$ is existentially quantified and the EP chooses
        $X\subseteq A^k$ for $\ass{S}$ on $\struct{A}$, then the EP chooses
        $X\subseteq A^k\subseteq B^k$ for $\ass{S}$ on $\struct{B}$;  
        \item if $S$ is universally quantified and the UP reacts by
        choosing $Y\subseteq B^k$ for $\ass{S}$ on $\struct{B}$, then we consider
        the situation where the UP chooses $ Y\cap A^k \subseteq A^k$
        for $\ass{S}$ on $\struct{A}$; 
    \end{itemize}
For FO variables, the EP follows the winning strategy on $\struct{A}$ and interprets any responses by the UP with values outside of $A$ as arbitrary responses inside $A$.

Suppose that, when playing on $\struct{B}$, the UP chooses $\ass{x}\in B\setminus A$. 
Since $\Phi$ is $\exists$-guarded, every clause $\phi$ contains a disjunct
$\neg S(\bar{x})$ in $\phi$ for some existentially quantified SO variable $S$.
By the strategy of the EP, we have that $\ass{S} \subseteq A^k$ for some $k$.
Hence, this choice of the UP trivially satisfies $\phi$. 
Alternatively, the strategy of the UP on $\struct{B}$ simulates a strategy
on $\struct{A}$. Hence, the whole game on $\struct B$ simulates a game 
on $\struct A$, and by the choice of strategy for the EP, we conclude
that the EP wins.

``\eqref{item:ex1}$\Rightarrow$\eqref{item:ex2}''  
This direction follows from~\cite[Corollary~55]{bodirsky2021logics} using $\Phi \mapsto \Phi^{\mathrm{sup}}$.
We give a full proof for the convenience of the reader.
We may assume that $\Phi$ is CNF-normalized; otherwise we replace it with any logically equivalent CNF-normalized sentence.
Denote by $\Phi^{\mathrm{sup}}$ the SO $\tau$-sentence obtained from $\Phi$ by adding:
\begin{itemize}
    \item a fresh SO variable $S$ existentially quantified at the beginning of the SO quantifier prefix,
    \item a fresh FO variable $z$ existentially quantified at the beginning of the FO quantifier prefix, together with the clause $S(z)$,
    \item for every existential FO variable $y$, a clause $\psi$ that contains a disjunct $\lnot S(x)$ for every
    universal FO variable $x$, and a disjunct $S(y)$,
    \item for every  universal FO  variable $x$, the disjunct $\neg S(x)$ to every
    clause in the quantifier-free part of $\Phi$.
\end{itemize}
Clearly, $\Phi^{\mathrm{sup}}$ is an $\exists$-guarded sentence. 
%
Moreover, there exists a non-empty substructure $\struct{B}$ of $\struct{A}$ such that $\struct{B} \models \Phi$ if and only if $\struct{A} \models \Phi^{\mathrm{sup}}$. 
Since $\Phi$ is preserved under superstructures, we have that $\Phi$ and $\Phi^{\mathrm{sup}}$ are logically equivalent.
\end{proof}

Next, Corollary~58 in~\cite{bodirsky2021logics} provides a transformation $\Phi \mapsto \Phi^{\mathrm{shom}}$ such that $\fm(\Phi)=\fm(\Phi^{\mathrm{shom}})$ if and only if $\Phi$ is preserved under surjective homomorphisms.
The following lemma provides a syntactic characterisation of the SO-fragment closed under bijective homomorphisms.
It is compatible with $\Phi \mapsto \Phi^{\mathrm{shom}}$ in the sense that $\Phi^{\mathrm{shom}}$ is positive whenever $\Phi$ is a normalized SO sentence, but it does not extend~\cite[Corollary~58]{bodirsky2021logics} in its full generality.
Recall the definition of positivity for SO formulas from Section~\ref{sect:prelims}. 
 
\begin{restatable}{lemma}{pos}   \label{lem:pos}
 For a SO sentence $\Phi$, the following are equivalent:
 \begin{enumerate}
     \item \label{item:pos1} $\Phi$ is preserved under bijective homomorphisms.
     \item \label{item:pos2} $\Phi$ is logically equivalent to a positive SO sentence.
 \end{enumerate} 
 
\end{restatable}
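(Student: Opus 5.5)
We prove both directions separately, treating normalized sentences first and then noting that the arguments survive the closure under internal conjunctions and disjunctions. The easier direction, \eqref{item:pos2}$\Rightarrow$\eqref{item:pos1}, is a game-transfer argument as in the proof of Lemma~\ref{lem:ex_guarded}. The converse, \eqref{item:pos1}$\Rightarrow$\eqref{item:pos2}, is a syntactic transformation: it makes the $\tau$-relations occur only positively by guessing smaller relations.

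For \eqref{item:pos2}$\Rightarrow$\eqref{item:pos1}, let $h\colon \struct{A}\to\struct{B}$ be a bijective homomorphism with $\struct{A}\models\Phi$. Identifying the domains via $h$, we may assume $A=B$ and $R^{\struct{A}}\subseteq R^{\struct{B}}$ for every $R\in\tau$.
\begin{itemize}
\item The EP plays on $\struct{B}$ by copying the winning strategy on $\struct{A}$ verbatim. All SO and FO values live on the same domain, so every move of the UP on $\struct{B}$ is literally a legal move on $\struct{A}$.
\item At the end of a play, consider the quantifier-free part in CNF witnessing positivity. Atoms involving SO variables or equality have the same truth value in both structures; equality is unaffected because $h$ is injective.
\item Every $\tau$-atom occurs only positively, so its truth can only increase when passing from $\struct{A}$ to $\struct{B}$. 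Hence every clause satisfied on $\struct{A}$ remains satisfied on $\struct{B}$.
\end{itemize}
For internal conjunctions and disjunctions of positive normalized sentences sharing an SO prefix, the same copying strategy works. The quantifier-free matrix, viewed as a Boolean combination of CNFs in which $\tau$-atoms appear positively, is still monotone in the $\tau$-relations. I expect this bookkeeping over the inductive definition to be the only delicate point, and it reduces to this monotonicity observation.

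For \eqref{item:pos1}$\Rightarrow$\eqref{item:pos2}, assume $\Phi$ is CNF-normalized. Define $\Phi^{+}$ as follows:
\begin{itemize}
\item For each $R\in\tau$ of arity $k$, existentially quantify a fresh SO variable $R'$ of arity $k$ at the front of the SO prefix.
\item Add fresh universal FO variables $\bar{x}_R$ at the front of the FO prefix, together with the clause $\neg R'(\bar{x}_R)\vee R(\bar{x}_R)$.
\item Replace every occurrence of $R$ in $\Phi$ by $R'$.
\end{itemize}
The FO variables are fresh, so prenexing is harmless, and the result is CNF-normalized. In it, $\tau$-atoms occur only in the new clauses and only positively, so $\Phi^{+}$ is positive.

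Semantically, $\struct{A}\models\Phi^{+}$ holds iff there is a $\tau$-structure $\struct{A}'$ on the domain $A$ with $R^{\struct{A}'}\subseteq R^{\struct{A}}$ for all $R\in\tau$ and $\struct{A}'\models\Phi$. The backward direction (from $\struct{A}\models\Phi$) follows by taking $\struct{A}'=\struct{A}$. For the forward direction, the identity map is a bijective homomorphism $\struct{A}'\to\struct{A}$, so preservation gives $\struct{A}\models\Phi$. Therefore $\Phi$ and $\Phi^{+}$ are logically equivalent.
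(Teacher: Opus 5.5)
Your proposal is correct and takes essentially the same route as the paper. For (2)$\Rightarrow$(1) the paper also identifies the domains via $h$ and lets the EP copy its strategy verbatim, using bijectivity for the equality atoms and positivity for the $\tau$-atoms; for (1)$\Rightarrow$(2) your $\Phi^{+}$ is the paper's $\Phi^{\mathrm{shom}}$ (guess sub-relations $R_p\subseteq R$, then conclude via the identity map as a bijective homomorphism), and the only cosmetic differences are that you prenex the new clauses explicitly and spell out the monotonicity argument for the inductive closure, which the paper leaves implicit.
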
     
\begin{proof}  
We only prove the statement for normalized SO sentences; it will become clear that the argument extends to the full inductive definition. 
 Let $\tau$ be the signature of $\Phi$.

``\eqref{item:pos2}$\Rightarrow$\eqref{item:pos1}'' 
We may assume that $\Phi$ itself is CNF-normalized; otherwise we replace its quantifier-free part with the CNF version of it witnessing positivity. 
Let $\struct{A}$ and $\struct{B}$ be two finite $\tau$-structures over the signature $\tau$ of $\Phi$ such that $\struct{A} \models \Phi$ and there exists a bijective homomorphism $h\colon \struct{A} \rightarrow \struct{B}$. Without loss of generality we assume that $A = B$, and that $R^\struct A \subseteq R^\struct B$
for all $R\in \tau$.
    We claim that the EP has the following winning strategy on $\struct{B}$ based on the winning strategy on $\struct{A}$:
    For a SO variable $S$:
    \begin{itemize}
        \item if $S$ is existentially quantified and the EP chooses $X\subseteq A^k$ for $\ass{S}$ on $\struct{A}$, then the EP chooses $X\subseteq B^k = A^k$
        for $\ass{S}$ on $\struct{B}$;  
        \item if $S$ is universally quantified and the UP reacts by choosing $Y\subseteq B^k$ for $\ass{S}$ on $\struct{B}$, then we consider the situation
        where the UP chooses $ Y \subseteq A^k = B^k$ for $\ass{S}$ on $\struct{A}$.
    \end{itemize} 
    The FO variables are treated analogously.
    
    Denote by $\smash{\struct{A}^+}$ and $\struct{B}^{\smash{+}}$ the expansions of $\struct{A}$ and $\struct{B}$ resulting from the above evaluations of the SO variables in $\Phi$.  
    Let $\big(\phi_1\vee \cdots \vee \phi_{k}\big)$ be a clause in the quantifier-free part of $\Phi$.
    Since the strategy of the EP on $\struct{A}$ is a winning strategy, there exists $i\in [k]$ such that $\smash{\struct{A}^+} \models \phi_i(\ass{\bar{x}})$. 
    For every SO variable $S$, we have that $S^{\struct{A}^+} = S^{\struct{B}^+}$.  %
    Hence, if $\phi_i(\bar{x})$ contains an SO variable $S$, i.e., $\phi_i(\bar{x}) \in\{S(\bar x), \neg S(\bar{x})\}$, then $\struct{B}\models \phi_i(h(\ass{\bar{x}}))$ is immediate. 
    Also, if $\phi_i$ is of the form $(x_1=x_2)$ or $\neg (x_1=x_2)$, then $\struct{B}\models \phi_i(h(\ass{\bar{x}}))$ is immediate because $h$ is bijective.
    Otherwise, since $\Phi$ is positive, $\phi_i(\bar{x})$ must be of the form $R(\bar{x})$ for $R\in \tau $.  
    Then $\struct{B}\models \phi_i(h(\ass{\bar{x}}))$ follows
    because $R^\struct{A}\subseteq R^\struct{B}$.  

``\eqref{item:pos1}$\Rightarrow$\eqref{item:pos2}''
This direction follows from~\cite[Corollary~58]{bodirsky2021logics} using $\Phi \mapsto \Phi^{\mathrm{shom}}$.
We give a full proof for the convenience of the reader.
We may assume that $\Phi$ is CNF-normalized; otherwise we replace it with any logically equivalent CNF-normalized sentence.
For every $R\in \tau $, we introduce a SO variable $R_p$ of the same arity as $R$.
Now consider the positive SO sentence $\Phi^{\mathrm{shom}}$ defined as 
 \[
 \exists_{R\in \tau } R_p \Big( \Phi'_p \wedge \bigwedge\nolimits_{R\in \tau } \forall \bar{x} \big(\neg R_p(\bar{x}) \vee R(\bar{x})   \big) \Big),
\]
where $\Phi'_p$ is obtained from $\Phi$ by replacing every atom of the form $R(\bar{x})$ for $R\in \tau$ with $R_p(\bar{x})$. 
Clearly, $\Phi^{\mathrm{shom}}$ is positive; we show that $\Phi$ and $\Phi^{\mathrm{shom}}$ are logically equivalent.
Let $\struct{A}$ be a finite $\tau$-structure. If $\struct{A}\models \Phi$, then clearly $\struct{A} \models \Phi_{p}$; the winning strategy 
is to first choose $R^{\struct{A}}$ for every $R_p$ ($R\in \tau$)  and mimic the winning strategy witnessing that $\struct A\models \Phi$.

Now suppose that $\struct{A}\models \Phi^{\mathrm{shom}}$.
For every $R\in \tau$,  let $R_p^\struct{A}$
be the evaluation of the quantified predicate $R_p$ in a winning strategy for the EP.
Then clearly,  the structure  $\struct{A}_p \coloneqq (A, (R_p^\struct{A})_{R\in\tau})$ satisfies
$\Phi$. Since $\Phi^{\mathrm{shom}}$ entails  $\forall \bar{x} \big(\neg R_p(\bar{x}) \vee R(\bar{x})   \big)$ for every $R\in \tau$,
the identity map is a bijective homomorphism from $\struct{A}_p$ to $\struct{A}$. Since $\Phi$ is preserved by bijective homomorphisms,
we have that $\struct{A} \models \Phi$. 
\end{proof}

 The transformations $\Phi \mapsto \Phi^{\mathrm{sup}}$, $\Phi \mapsto \Phi^{\mathrm{shom}}$ from~\cite{bodirsky2021logics} used to prove ``(1)$\Rightarrow$(2)'' in Lemmas~\ref{lem:ex_guarded},~\ref{lem:pos} can be combined, yielding an $\exists$-guarded positive sentence. This observation together with the
 implication ``(2)$\Rightarrow$(1)'' from these lemmas, yield the following result.

\begin{restatable}{lemma}{existguardedpositive}  \label{thm:exist_guarded_positive}
     For a SO sentence $\Phi$, the following are equivalent:
 \begin{enumerate}
     \item \label{item:pos1ex} $\Phi$ is preserved under injective homomorphisms.
     \item \label{item:pos2ex} $\Phi$ is logically equivalent to an $\exists$-guarded positive SO sentence.
     \item \label{item:pos3ex} $\Phi$ is logically equivalent to an $\exists$-guarded SO sentence and a positive SO sentence.
 \end{enumerate}
\end{restatable}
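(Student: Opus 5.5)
I would prove the cycle of implications (1)$\Rightarrow$(2)$\Rightarrow$(3)$\Rightarrow$(1), working with normalized sentences first, as in Lemmas~\ref{lem:ex_guarded} and~\ref{lem:pos}. The extension to the full inductive definitions then follows the same pattern as those proofs.

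(2)$\Rightarrow$(3) is trivial, since an $\exists$-guarded positive sentence is in particular both $\exists$-guarded and positive.

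For (3)$\Rightarrow$(1), I use that every injective homomorphism $h\colon\struct{A}\to\struct{B}$ factors as a bijective homomorphism followed by an embedding. Concretely, let $\struct{C}$ be the substructure of $\struct{B}$ on $h(A)$. Then $h\colon\struct{A}\to\struct{C}$ is a bijective homomorphism, and $\struct{B}$ is a superstructure of $\struct{C}$. If $\struct{A}\models\Phi$, then $\struct{C}\models\Phi$ by Lemma~\ref{lem:pos} applied to the positive equivalent of $\Phi$. Next, $\struct{B}\models\Phi$ by Lemma~\ref{lem:ex_guarded} applied to the $\exists$-guarded equivalent of $\Phi$. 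Both equivalents define the same class $\fm(\Phi)$, so this is legitimate.

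The substantial direction is (1)$\Rightarrow$(2). Preservation under injective homomorphisms implies preservation under bijective homomorphisms, and also preservation under superstructures, since inclusions are injective homomorphisms.

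First I apply the construction $\Phi\mapsto\Phi^{\mathrm{shom}}$ from the proof of Lemma~\ref{lem:pos}, with $\Phi$ CNF-normalized. This gives a positive sentence equivalent to $\Phi$, and $\Phi^{\mathrm{shom}}$ is again normalized, with a CNF quantifier-free part in which every $\tau$-atom occurs positively. So $\Phi^{\mathrm{shom}}$ is still preserved under superstructures.

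Second, I apply $\Psi\mapsto\Psi^{\mathrm{sup}}$ from the proof of Lemma~\ref{lem:ex_guarded} to $\Psi\coloneqq\Phi^{\mathrm{shom}}$. This yields an $\exists$-guarded sentence, and by that lemma it is equivalent to $\Psi$, hence to $\Phi$.

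The remaining point, which I expect to be the only real obstacle, is checking that $\Psi^{\mathrm{sup}}$ is still positive. Inspecting the construction, it adds a fresh existential SO variable $S$ and a fresh existential FO variable $z$. It also adds clauses built only from $S$-literals, namely $S(z)$ and $\neg S(x_1)\vee\cdots\vee S(y)$, and it adds disjuncts $\neg S(x)$ to existing clauses. No $\tau$-atom is introduced or negated, and the CNF shape is preserved. So every negative literal either mentions an SO variable or is an equality, as in $\Psi$, and $\Psi^{\mathrm{sup}}$ is positive.

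One also has to make sure that the universally quantified variables introduced by the shom step are guarded. These are the variables $\bar x$ in $\forall\bar x(\neg R_p(\bar x)\vee R(\bar x))$, and after prenexing they are universal FO variables of $\Psi$. The sup construction adds $\neg S(x)$ for every universal FO variable $x$ in every clause, so these are covered as well. Hence $\Psi^{\mathrm{sup}}$ is $\exists$-guarded, positive, and equivalent to $\Phi$, which gives (2).
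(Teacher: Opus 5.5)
Your proposal is correct and takes essentially the same route as the paper: (3)$\Rightarrow$(1) by combining Lemmas~\ref{lem:ex_guarded} and~\ref{lem:pos}, and (1)$\Rightarrow$(2) via $(\Phi^{\mathrm{shom}})^{\mathrm{sup}}$. You add two details the paper calls straightforward: you make explicit the factorization of an injective homomorphism into a bijective homomorphism followed by an embedding, and you check that $\Phi^{\mathrm{shom}}$ is already CNF-normalized (so the sup step needs no re-normalization that could destroy positivity) and that the sup step adds only $S$-literals.
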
 
\begin{proof}
``\eqref{item:pos2ex}$\Rightarrow$\eqref{item:pos3ex}'' This direction is trivial.

``\eqref{item:pos3ex}$\Rightarrow$\eqref{item:pos1ex}'' This direction is by a direct application of Lemmas~\ref{lem:ex_guarded} and~\ref{lem:pos}.

``\eqref{item:pos1ex}$\Rightarrow$\eqref{item:pos2ex}''  We may assume that $\Phi$ is CNF-normalized; otherwise we replace it with any logically equivalent CNF-normalized sentence.
Let  $\Phi \mapsto \Phi^{\mathrm{sup}}$ be the transformation from the proof
of ``\eqref{item:ex1}$\Rightarrow$\eqref{item:ex2}'' in Lemma~\ref{lem:ex_guarded}, and $\Phi \mapsto \Phi^{\mathrm{shom}}$
the transformation from the proof of ``\eqref{item:pos1}$\Rightarrow$\eqref{item:pos2}'' in Lemma~\ref{lem:pos}. 
It is straightforward to observe that $(\Phi^{\mathrm{shom}})^{\mathrm{sup}}$ is $\exists$-guarded
positive SO sentence. And since $\Phi$ is preserved under inverse injective homomorphisms, we conclude
with similar arguments as in the proofs of ``(1)$\Rightarrow$(2)'' in Lemmas~\ref{lem:ex_guarded}
and~\ref{lem:pos} that $\Phi$ is logically equivalent to $(\Phi^{\mathrm{shom}})^{\mathrm{sup}}$. 
 \end{proof}

\subsection{Inverse injective homomorphisms} 
 \label{section:dual}
Note that $\mathcal{K}$ is closed under inverse injective homomorphisms if and only if the \emph{complement} of $\mathcal{K}$, i.e., the class of all finite $\tau$-structures not in $\mathcal{K}$, is closed under injective homomorphisms.
Hence, by Lemma~\ref{thm:exist_guarded_positive}, $ \Phi$ is logically equivalent to an $\exists$-guarded positive SO sentence if and only if  $\neg  \Phi$ is closed under inverse injective homomorphisms.
The following fact can be  easily  shown using the De Morgan's laws and the distributive law of propositional logic.
\begin{lemma}   \label{lem:dualizing_positivity}
    For every SO sentence $\Phi \coloneqq \mathrm{Q}_1 S_1\cdots \mathrm{Q}_m S_m \ldotp \Phi_{\ast}$, the following are equivalent:
 \begin{enumerate}  
     \item    There is a positive SO sentence $\mathrm{Q}_1 S_1\cdots \mathrm{Q}_m S_m \ldotp \Psi_{\exists}$ logically equivalent to $\Phi$.
    \item   There is a negative SO sentence $\overline{\mathrm{Q}}_1 S_1\cdots \overline{\mathrm{Q}}_m S_m \ldotp \Psi_{\forall}$ logically equivalent to $\neg \Phi$. 
 \end{enumerate}  
\end{lemma}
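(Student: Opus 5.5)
The plan is to push the negation $\neg\Phi$ through the SO quantifier prefix and then through the quantifier-free part using De Morgan's laws, and to check that positivity turns into negativity. Since the definitions of positive and negative SO sentences are inductive (normalized sentences closed under internal conjunctions and disjunctions), I will first treat normalized sentences and then do an induction over the internal Boolean combinations. It suffices to prove (1)$\Rightarrow$(2). The converse follows by the symmetric argument with the roles of positive and negative exchanged, together with $\neg\neg\Phi\equiv\Phi$ and $\overline{\overline{\mathrm{Q}}}=\mathrm{Q}$.

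\emph{Normalized case.} Let $\mathrm{Q}_1 S_1\cdots \mathrm{Q}_m S_m\, \mathrm{Q}'_1x_1\cdots\mathrm{Q}'_\ell x_\ell\, \psi$ be a positive normalized sentence equivalent to $\Phi$. By the definition, we may take $\psi$ in CNF such that every $\tau$-atom $R(\bar x)$ with $R\in\tau$ occurs only unnegated. Then $\neg\Phi$ is equivalent to
\[\overline{\mathrm{Q}}_1 S_1\cdots \overline{\mathrm{Q}}_m S_m\, \overline{\mathrm{Q}'_1}x_1\cdots\overline{\mathrm{Q}'_\ell} x_\ell\, \neg\psi.\]
Applying De Morgan's laws to $\neg\psi$ gives a DNF formula in which every literal is the complement of a literal of $\psi$. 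In particular, $\tau$-atoms now occur only negatively, while atoms involving SO variables or equality may occur with either sign. By the remark in the preliminaries that DNF may be used equivalently to CNF, this sentence is negative. Alternatively, one can apply the distributive law to convert it to CNF, which does not change the polarity of any literal. Its SO prefix is exactly $\overline{\mathrm{Q}}_1S_1\cdots\overline{\mathrm{Q}}_mS_m$, as required.

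\emph{Internal conjunctions and disjunctions.} Suppose the positive sentence is $\mathrm{Q}_1S_1\cdots\mathrm{Q}_mS_m(\psi_1\wedge\psi_2)$ or $\mathrm{Q}_1S_1\cdots\mathrm{Q}_mS_m(\psi_1\vee\psi_2)$, where the sentences $\mathrm{Q}_1S_1\cdots\mathrm{Q}_mS_m\,\psi_i$ are positive. Negating gives $\overline{\mathrm{Q}}_1S_1\cdots\overline{\mathrm{Q}}_mS_m(\neg\psi_1\vee\neg\psi_2)$, or the same with $\wedge$. By the induction hypothesis, each $\overline{\mathrm{Q}}_1S_1\cdots\overline{\mathrm{Q}}_mS_m\,\neg\psi_i$ is equivalent to a negative sentence with the same prefix. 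Here I must be careful to apply the hypothesis to the matrix $\neg\psi_i$ itself, and not merely up to logical equivalence of whole sentences. Otherwise the internal combination would not be equivalent to the combination of the pieces.

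The main obstacle is exactly this bookkeeping: the induction has to run on formulas under a shared SO prefix (a negative $\tau\cup\{S_1,\dots,S_m\}$-formula whose further quantifiers are all dualized), not on sentences. I would state and prove the claim at the level of such formulas, where the dualization is syntactic and therefore yields literal equivalence of matrices. Closure of negativity under internal conjunctions and disjunctions then gives the result.
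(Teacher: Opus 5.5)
Your proposal is correct and is essentially the argument the paper has in mind: the paper gives no proof beyond saying the lemma follows easily from De Morgan's laws and the distributive law, which is your push-the-negation-inward dualization. Your point that the induction over internal conjunctions and disjunctions must run on the syntactic dual of formulas with free SO variables, not on sentence-level equivalence, is the right way to make that one-line justification rigorous.
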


We can therefore say that negativity constitutes the \emph{dual} notion to positivity. 
We shall now dualize the notion of $\exists$-guarding.
 \begin{definition}\label{def:forallguarding} 
     A DNF-normalized SO sentence $\Phi$ is \emph{$\forall$-restricted} if, for every clause $\phi$ in $\Phi$ and every existentially quantified FO variable $x$ in $\phi$, there is a conjunct $S(\bar{x})$ in $\phi$ for some universally quantified SO variable $S$ such that $x \in \bar x$. 
     We naturally extend this definition to normalized SO sentences.
 The set of \emph{$\forall$-restricted SO sentences} is the closure of $\forall$-restricted normalized SO sentences under internal conjunctions and disjunctions.  
 \end{definition}

%
Analogously to $\exists$-guarding, the condition of $\forall$-restriction allows the UP to restrict 
the game of evaluating a SO sentence in a structure to any particular substructure 
by  evaluating every universally quantified SO variable within this substructure.
Hence, the UP has the power to verify that the $\forall$-restricted SO sentence is satisfied on all
substructures, and this reasoning can be used to show that every $\forall$-restricted SO sentence is preserved under substructures.
Instead of giving the obvious counterpart to Lemma~\ref{lem:pos}, we give a lemma that
captures the  duality between $\exists$-guarding and $\forall$-restriction.

\begin{restatable}{lemma}{duality}\label{lem:dualizing_exists_guarding}
  For every SO sentence $\Phi \coloneqq \mathrm{Q}_1 S_1\cdots \mathrm{Q}_m S_m \ldotp \Phi_{\ast}$, the following are equivalent:
 \begin{enumerate}  
     \item \label{item:dual1} There is an $\exists$-guarded SO sentence $\mathrm{Q}_1 S_1\cdots \mathrm{Q}_m S_m \ldotp \Psi_{\exists}$ logically equivalent to $\Phi$.
    \item \label{item:dual2}  There is an $\forall$-restricted SO sentence $\overline{\mathrm{Q}}_1 S_1\cdots \overline{\mathrm{Q}}_m S_m \ldotp \Psi_{\forall}$ logically equivalent to $\neg \Phi$. 
 \end{enumerate}  
 
\end{restatable}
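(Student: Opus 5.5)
The plan is a purely syntactic dualization by negation, as for Lemma~\ref{lem:dualizing_positivity}. We first handle normalized sentences and then lift the result through internal conjunctions and disjunctions. The key observation is that pushing a negation through a normalized SO sentence does four things at once: it flips every quantifier (SO and FO), it turns a CNF quantifier-free part into a DNF one, it turns each clause into a conjunction of negated literals, and it preserves the pattern of which variables occur in which clause. Under this translation a CNF clause containing the disjunct $\neg S(\bar x)$, with $S$ existential and $x\in\bar x$ universal, becomes a DNF clause containing the conjunct $S(\bar x)$, with $S$ now universal and $x$ now existential. This is exactly the correspondence between Definition~\ref{def:exist_guard} and Definition~\ref{def:forallguarding}.

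For ``\eqref{item:dual1}$\Rightarrow$\eqref{item:dual2}'', first take $\Psi := \mathrm{Q}_1 S_1\cdots \mathrm{Q}_m S_m \ldotp \Psi_\exists$ to be a normalized sentence. By definition some CNF-form $\bar{\mathrm{Q}}\bar x \bigwedge_j \bigvee_i \ell_{ij}$ of its quantifier-free part witnesses $\exists$-guarding. Then $\neg\Psi$ is logically equivalent to
\[
\overline{\mathrm{Q}}_1 S_1\cdots \overline{\mathrm{Q}}_m S_m\, \overline{\bar{\mathrm{Q}}}\bar x \bigvee\nolimits_j \bigwedge\nolimits_i \neg\ell_{ij}.
\]
This is DNF-normalized, and it is $\forall$-restricted by the clause-wise correspondence above. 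Since $\Psi\equiv\Phi$, it is equivalent to $\neg\Phi$ and has the required SO prefix.

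For the general case, we induct on the closure under internal conjunctions and disjunctions. The sentence $\Psi$ is built from normalized pieces $\mathrm{Q}_1 S_1\cdots\mathrm{Q}_m S_m\ldotp\psi_k$, combined with $\wedge,\vee$ under the common SO prefix. Negation passes through the common prefix, flipping it to $\overline{\mathrm{Q}}_1 S_1\cdots\overline{\mathrm{Q}}_m S_m$. By De Morgan it then swaps $\wedge$ and $\vee$, and it applies to each piece. By the normalized case each piece $\overline{\mathrm{Q}}_1 S_1\cdots\overline{\mathrm{Q}}_m S_m\ldotp\neg\psi_k$ is $\forall$-restricted. Since $\forall$-restricted sentences are closed under internal conjunctions and disjunctions, the whole negated sentence is $\forall$-restricted.

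The direction ``\eqref{item:dual2}$\Rightarrow$\eqref{item:dual1}'' is symmetric. Negating a $\forall$-restricted DNF-normalized sentence yields an $\exists$-guarded CNF-normalized one, which is equivalent to $\neg\neg\Phi\equiv\Phi$, with prefix $\overline{\overline{\mathrm{Q}}}_i=\mathrm{Q}_i$.

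The main obstacle is bookkeeping, not any real mathematics. The notions of $\exists$-guarding and $\forall$-restriction for normalized sentences are defined via \emph{some} CNF/DNF conversion of the quantifier-free part. We therefore have to check that negating the witnessing CNF gives a DNF that is itself a legitimate witness, and that FO quantifier blocks nested inside the internal combinations are handled consistently. We resolve this by always negating the witnessing normal form itself, never re-normalizing, so the guard literals persist verbatim.
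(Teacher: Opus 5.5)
Your proposal is correct and is essentially the paper's proof: you push the negation through the witnessing CNF-normalized sentence by De Morgan's laws, flipping all SO and FO quantifiers, so that each guard disjunct $\neg S(\bar x)$ (with $S$ existential and $x$ universal) becomes a conjunct $S(\bar x)$ (with $S$ universal and $x$ existential) in the resulting DNF clause, and the converse direction is symmetric. The paper proves only the normalized case explicitly, so your induction over internal conjunctions and disjunctions is an addition rather than a deviation; it works because you negate the witnessing normal forms verbatim, which keeps the pieces equivalent at the formula level under the common prefix.
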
   
\begin{proof}  
We only prove the statement for normalized SO sentences. 

    ``\eqref{item:dual1}$\Rightarrow$\eqref{item:dual2}'' 
    Without loss of generality, $\mathsf{Q}_1 S_1\cdots \mathsf{Q}_m S_m \ldotp \Phi_{\exists}$ is CNF-normalized; otherwise we replace its quantifier-free part with the CNF version of it witnessing $\exists$-guarding.
By applying De Morgan's laws, we get that $\neg(\mathsf{Q}_1 S_1\cdots \mathsf{Q}_m S_m \ldotp \Phi_{\exists})$ is logically equivalent to a SO sentence $\overline{\mathsf{Q}}_1 S_1\cdots \overline{\mathsf{Q}}_m S_m \ldotp \Phi_{\forall}$ in prenex normal form and whose quantifier-free part is in disjunctive normal form.
Since $\mathsf{Q}_1 S_1\cdots \mathsf{Q}_m S_m \ldotp \Phi_{\exists}$ is $\exists$-guarded, for every clause $\phi$ in $\Phi_{\forall}$ and every existentially quantified FO variable
$x$ which appears in $\phi$, there is a conjunct $S(\bar{x})$
in $\phi$ for some universally quantified SO variable
$S$, and $x$ is an argument of $\bar x$.  

``\eqref{item:dual2}$\Rightarrow$\eqref{item:dual1}'' This direction is analogous. 
\end{proof} 
 
Note that Lemmas~\ref{lem:dualizing_positivity} and~\ref{lem:dualizing_exists_guarding} are formulated in such a way that allows us to exploit the inductive nature of the involved definitions.
For example, to check whether a given sentence is $\exists$-guarded, we can first decompose it into its normalized components
with the same SO quantifier prefix, and then, for each component we check $\exists$-guarding, or verify $\forall$-restriction for its negation.
The important part is that, after the application of Lemma~\ref{lem:dualizing_exists_guarding}, we keep the original SO quantifier prefix.
Finally, combining Lemma~\ref{lem:dualizing_positivity} and Lemma~\ref{lem:dualizing_exists_guarding},
we obtain the following dual statement as a corollary to Lemma~\ref{thm:exist_guarded_positive}. 

\begin{lemma}
    \label{thm:exist_guarded_negative}
     For a SO sentence $\Phi$, the following are equivalent:
 \begin{enumerate}
     \item   $\Phi$ is preserved under inverse injective homomorphisms.
     \item   $\Phi$ is logically equivalent to a $\forall$-restricted negative SO sentence.
     \item   $\Phi$ is logically equivalent to a $\forall$-restricted SO sentence and a negative SO sentence.  
 \end{enumerate}
\end{lemma}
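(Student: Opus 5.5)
The plan is to derive Lemma~\ref{thm:exist_guarded_negative} from Lemma~\ref{thm:exist_guarded_positive} by passing to the negated sentence $\neg\Phi$ and transporting each syntactic condition with the duality Lemmas~\ref{lem:dualizing_positivity} and~\ref{lem:dualizing_exists_guarding}.

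The basic semantic fact is that $\fm(\Phi)$ is closed under inverse injective homomorphisms if and only if $\fm(\neg\Phi)$, the complement of $\fm(\Phi)$ among finite $\tau$-structures, is closed under injective homomorphisms. For suppose $h\colon\struct A\to\struct B$ is injective. If $\struct B\models\Phi$ forces $\struct A\models\Phi$, then contrapositively $\struct A\models\neg\Phi$ forces $\struct B\models\neg\Phi$, and the converse holds by the same argument.

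For ``(1)$\Rightarrow$(2)'', assume (1). Then $\neg\Phi$ is preserved under injective homomorphisms. By Lemma~\ref{thm:exist_guarded_positive}, ``(1)$\Rightarrow$(2)'', $\neg\Phi$ is logically equivalent to an $\exists$-guarded positive SO sentence $\Psi$; concretely we may take $\Psi=((\neg\Phi)^{\mathrm{shom}})^{\mathrm{sup}}$. We then negate $\Psi$ once. Push the negation through the common SO prefix, then through the FO prefix, and apply De Morgan's laws to the CNF quantifier-free part of each normalized component. This yields a DNF-normalized sentence with dual SO quantifiers, equivalent to $\neg\Psi$ and hence to $\Phi$.

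Two checks remain, and both are carried out simultaneously on this single sentence. Each guard $\neg S(\bar x)$ with $S$ existential becomes a conjunct $S(\bar x)$ with $S$ now universal. Each universal FO variable becomes existential. So the argument of Lemma~\ref{lem:dualizing_exists_guarding} gives $\forall$-restriction. Likewise, positive $\tau$-atoms become negated, so Lemma~\ref{lem:dualizing_positivity} gives negativity. For sentences built by internal conjunctions and disjunctions, negation swaps $\wedge$ and $\vee$ but keeps the shared (now dualized) prefix, so the closure conditions persist.

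The main obstacle is in this direction, where (2) demands a single sentence that is both $\forall$-restricted and negative. Lemmas~\ref{lem:dualizing_positivity} and~\ref{lem:dualizing_exists_guarding} are stated separately, each as an existence statement. Invoking them independently could produce two different sentences, which would only give (3). The way around this is to observe that both lemmas are witnessed by the very same syntactic negation, the De Morgan dualization of the CNF witness. Since $\Psi$ has one CNF form witnessing both properties at once, its dual DNF form witnesses both dual properties at once.

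For ``(2)$\Rightarrow$(3)'' there is nothing to prove. For ``(3)$\Rightarrow$(1)'', suppose $\Phi$ is equivalent to a $\forall$-restricted sentence $\Phi_1$ and to a negative sentence $\Phi_2$. Apply Lemma~\ref{lem:dualizing_exists_guarding} to $\Phi_1$: since $\neg\neg\Phi_1$ is equivalent to $\Phi_1$, the direction ``(2)$\Rightarrow$(1)'' shows that $\neg\Phi$ is equivalent to an $\exists$-guarded sentence. Similarly, Lemma~\ref{lem:dualizing_positivity} applied to $\Phi_2$ shows that $\neg\Phi$ is equivalent to a positive sentence. Lemma~\ref{thm:exist_guarded_positive}, ``(3)$\Rightarrow$(1)'', then shows that $\neg\Phi$ is preserved under injective homomorphisms. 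By the complement observation, $\Phi$ is preserved under inverse injective homomorphisms.
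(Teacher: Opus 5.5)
Your proposal is correct and takes the paper's route: the paper also obtains this lemma as a corollary of Lemma~\ref{thm:exist_guarded_positive}, using the complement observation and the dualities of Lemmas~\ref{lem:dualizing_positivity} and~\ref{lem:dualizing_exists_guarding}. Your remark that one De Morgan dualization of the CNF witness gives a single sentence that is both $\forall$-restricted and negative, as (2) requires, makes explicit a point the paper leaves implicit.
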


\section{The CSP hammer}
\label{subsect:hammer}

Recall from the paragraph above Lemma~\ref{lem:pos} that the notion of positivity characterizing the preservation by bijective homomorphisms is compatible with the transformation $\Phi \mapsto \Phi^{\mathrm{shom}}$ from~\cite[Corollary~58]{bodirsky2021logics} characterizing preservation by surjective homomorphisms.
More specifically, $\Phi^{\mathrm{shom}}$ is positive whenever $\Phi$ is a CNF-normalized SO sentence.
However, not every positive SO sentence is preserved by surjective homomorphisms; dually, not every
negative SO sentence is preserved under inverse surjective homomorphisms. 
%
%
\begin{example} \label{ex:positivity_sucks}
For a binary symbol $N$, consider the following MSO $\{N\}$-sentence:
\begin{align*}
  \Phi \coloneqq  \forall A \, \exists E\, \forall x,y \Big(\big( \neg N(x,y) \vee \neg A(x) \vee E(y) \big) \wedge \big( \neg N(x,y) \vee \neg E(x) \vee A(y) \big)  \Big).   
\end{align*} 
This sentence simply says that the EP has to copy the $A$-colouring of vertices by the UP with an $E$-colouring across $N$-edges.
Clearly, $\Phi$ is CNF-normalized, negative because $\{N\}$-atoms only appear negatively, and $\forall$-restricted because it does not contain any existentially quantified FO variables.
Hence, it is preserved under inverse injective homomorphisms (Lemma~\ref{thm:exist_guarded_negative}).
However, it is not preserved under inverse surjective homomorphisms.
Indeed, the graph $\struct{G}\coloneqq ([3];\{(1,2),(2,1),(3,2),(2,3)\})$ has a surjective homomorphism to the 2-clique $\struct{K}_2=([2];\{(1,2),(2,1)\})$.
On $\struct{K}_2$, the EP wins the evualuation game for $\Phi$ because every vertex only has a single neighbour.
On $\struct{G}$, however, the UP has a winning strategy by choosing to $A$-colour $1$ while not $A$-colouring $3$. 
\end{example}

It is not clear for us how to generalize the notion of positivity (negativity) 
so that it would capture the full fragment of SO closed under (inverse) surjective homomorphisms.
%
By extension, we face the same obstacle to identify a fragment of SO closed
that characterizes preservation under inverse homomorphisms; it is also worth noting that testing whether
a given SO sentence is preserved by homomorphisms is in fact undecidable~\cite{bodirsky2021logics}.

In the present section, we introduce the CSP hammer, a certain syntactic transformation $\Phi \mapsto \Phi^{\mathrm{CSP}}$ for $\forall$-restricted negative SO sentences.
The purpose of the CSP hammer is to turn every $\forall$-restricted negative SO sentence into one that defines a CSP and whose model-checking is polynomial-time equivalent to the original when restricted to a particular subset of instances. 
To this end, we first introduce the \emph{restriction} transformation on  FO sentences.
Let $\Phi$ be a FO $\tau$-sentence, and $U$ a fresh unary relation symbol not in $\tau$.
Set $$\Phi_{|U} \coloneqq  \big( \forall z \ldotp \neg U(z)\big) \vee \Phi',$$  where $\Phi'$ is obtained from $\Phi$ by the following syntactic replacement.
We replace every atomic formula $\psi(x_1,\dots,x_n)$ in $\Phi$ with the formula 
\begin{align}
 \big(\bigvee\nolimits_{i\in I_{\forall}} \neg U(x_i)\big) \vee   \Big(\bigwedge\nolimits_{i\in I_{\exists}}U(x_i)\land \psi(x_1,\dots, x_n) \Big), \label{eq:syntactic_restriction}
\end{align}
where $I_{\exists}$ and $I_{\forall}$ consist of the indices $i \in [n]$ such that $x_i$ is existentially and universally quantified in $\Phi$, respectively.
The purpose of $ \Phi_{|U}$ is captured by the next observation.

\begin{observation}\label{obs:restriction-to-U} Let $\struct{A}$ be an arbitrary $(\tau \cup \{U\})$-structure.
Then $\struct{A}$ satisfies $\Phi_{|U}$ if and only if the substructure of $\struct{A}$ on $U^{\struct{A}}$ satisfies $\Phi$. \hfill \qedsymbol
\end{observation}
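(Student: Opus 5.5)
The plan is a direct comparison of evaluation games, as in the proofs of Lemmas~\ref{lem:ex_guarded} and~\ref{lem:pos}. First I fix conventions. Assume $\Phi$ is in prenex normal form with its quantifier-free part in negation normal form, so negations occur only in front of atoms, and read the replacement~\eqref{eq:syntactic_restriction} as being applied to each literal $\psi$, atomic or negated atomic. Then every FO variable is unambiguously existential or universal. The matrix of $\Phi'$ is a monotone Boolean combination, built with $\wedge$ and $\vee$ only, of the replaced literals.

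The degenerate case $U^{\struct A}=\emptyset$ is handled by the disjunct $\forall z\ldotp\neg U(z)$. This needs the convention that the empty substructure satisfies $\Phi$, or else that the observation is only applied when $U^{\struct A}\neq\emptyset$. So assume from now on that $U^{\struct A}\neq\emptyset$, fix some $u_0\in U^{\struct A}$, and let $\struct A_U$ be the substructure on $U^{\struct A}$. The first disjunct is now false, so $\struct A\models\Phi_{|U}$ if and only if $\struct A\models\Phi'$.

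For ``$\Leftarrow$'', suppose $\struct A_U\models\Phi$. The EP plays on $\struct A$ as follows. Whenever the UP picks $\ass{x}\notin U^{\struct A}$, the EP pretends that the UP picked $u_0$ in a simulated game on $\struct A_U$. The EP always answers with the element given by its winning strategy in the simulated game, and that element lies in $U^{\struct A}$. At the end I compare the replaced literals in the real play with the original literals in the simulated play. If a literal contains a universal variable played outside $U^{\struct A}$, its replacement is true because of the disjunct $\neg U(x_i)$. Otherwise every variable of the literal is assigned identically in both plays and lies in $U^{\struct A}$. The conjuncts $U(x_i)$ then hold, the disjuncts $\neg U(x_i)$ fail, and the literal has the same truth value in $\struct A$ as in $\struct A_U$, since $\struct A_U$ is an induced substructure. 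So every replaced literal is at least as true as its simulated counterpart. By monotonicity of the matrix, the real matrix is true.

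For ``$\Rightarrow$'' I argue dually via the UP. If $\struct A_U\not\models\Phi$, the UP copies its winning strategy from $\struct A_U$ and always plays inside $U^{\struct A}$. An EP move outside $U^{\struct A}$ is simulated as $u_0$. Every literal containing such an existential variable becomes false, because its conjunct $U(x_i)$ fails and no disjunct $\neg U(x_i)$ over a universal variable can hold. Every other literal agrees with the simulated play. Monotonicity then makes the real matrix at most as true as the simulated one, which is false.

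The main obstacle is the bookkeeping just described: the matrix is not a single clause, so a single literal made true or false by the $U$-guards is not enough on its own. The argument therefore needs the negation-normal-form and monotonicity observation, together with pinning down the convention for literals and for empty $U^{\struct A}$.
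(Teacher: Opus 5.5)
Your proof is correct. The paper states this observation without proof, and your simulation of evaluation games, in the style of the paper's proofs of Lemmas~\ref{lem:ex_guarded}, \ref{lem:pos} and~\ref{lem:CSP_hammer}, is exactly the routine verification the authors leave implicit. The two conventions you isolate are genuinely needed: if the replacement were applied to atoms rather than literals, $\forall x\ldotp\neg N(x,x)$ would become equivalent to $\forall x\,(U(x)\wedge\neg N(x,x))$, and for $U^{\struct A}=\emptyset$ the sentence $\exists x\ldotp x=x$ refutes the statement unless one adopts your convention for the empty substructure, which is harmless where the paper applies the observation because a structure with empty $U$ satisfies $\Phi_{|U}$ outright via $\forall z\ldotp\neg U(z)$.
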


Now, let $\Phi \coloneqq\mathrm{Q_1}S_1\dots \mathrm{Q_n}S_n\ldotp\Psi$ be a $\forall$-restricted negative SO $\tau$-sentence such that $\Psi$ is a FO $(\tau\cup \sigma)$-sentence for $\sigma\coloneqq \{S_1,\dots,S_n\}$.
Moreover, let $U$ and $N$ be fresh symbols of arities $1$ and $2$, respectively.
Fix $k$ to be the first index $i\in [n]$ such that $\mathrm{Q_i} = \forall$;  if no such index exists, set $k = n+1$. 
We set  
\begin{align}
    \Phi^{\mathrm{CSP}} \coloneqq   \mathrm{Q_1}S_1\dots \mathrm{Q_k}S_k \forall U \mathrm{Q_{k+1}}S_{k+1}\dots \mathrm{Q_n}S_n  \left(\Psi^{N}_{\neg \mathrm{edge}} \lor (\Psi^{N}_{\neg \mathrm{loop}} \land \Psi)\right)_{|U},
\end{align}
where $\Psi^{N}_{\neg \mathrm{edge}}$ and $\Psi^{N}_{\neg \mathrm{loop}}$ are FO $(\tau\cup \sigma \cup \{N\})$-sentences defined below:
\begin{align}
    \Psi^{N}_{\neg \mathrm{edge}} & \ \coloneqq  \exists x,y \big( (x\neq y) \land \lnot N(x,y)\big) \\
    \Psi^{N}_{\neg \mathrm{loop}} & \ \coloneqq \forall x \ldotp \lnot N(x,x).
\end{align} 
 
It is not hard to see that $\Phi \mapsto \Phi^{\mathrm{CSP}}$ preserves negativity, and $\forall$-restriction\textemdash via the second disjunct in~\eqref{eq:syntactic_restriction}.
It also preserves the alternation-depth for the second-order quantifier prefix of $\Phi$ whenever $\Phi$ contains at least one universally quantified SO variable.
If there is no universal SO variable, then model-checking of $\Phi$ is in NP, which for us is an uninteresting case. 
The next lemma contains the core properties of the CSP hammer.
\begin{restatable}{lemma}{csphammer}   \label{lem:CSP_hammer}    Let $\Phi$  be a $\forall$-restricted negative SO $\tau$-sentence. Then:
    \begin{enumerate} 
        \item \label{item:disjun} $\Phi^{\mathrm{CSP}}$ is preserved under disjoint unions.
        \item \label{item:invhom} $\Phi^{\mathrm{CSP}}$ is preserved under inverse homomorphisms.  
        \item \label{item:evalu} 
        Model-checking for $\Phi$ reduces in polynomial time to  model-checking for $\Phi^{\mathrm{CSP}}$.   
    \end{enumerate}  
    
\end{restatable}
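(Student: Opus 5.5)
The plan is to first unfold the semantics of $\Phi^{\mathrm{CSP}}$ via Observation~\ref{obs:restriction-to-U}, and then prove each item by a strategy-transfer argument in the style of the proofs of Lemmas~\ref{lem:ex_guarded} and~\ref{lem:pos}. The signature of $\Phi^{\mathrm{CSP}}$ is $\tau\cup\{N\}$. Since $\forall U$ commutes with the universal block starting at $S_k$, a structure $\struct{A}$ satisfies $\Phi^{\mathrm{CSP}}$ if and only if the EP can fix $S_1,\dots,S_{k-1}$ on all of $A$ such that the following holds for every subset $U\subseteq A$ inducing an $N$-clique, i.e.\ all pairs of distinct elements of $U$ are $N$-related. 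The set $U$ must contain no $N$-loop, and the substructure of the expansion on $U$ must satisfy the remaining game $\mathrm{Q}_kS_k\cdots\mathrm{Q}_nS_n\ldotp\Psi$. If $U$ is not an $N$-clique, $\Psi^N_{\neg\mathrm{edge}}$ holds on $U$; the empty $U$ is handled by the $\forall z\ldotp\neg U(z)$ disjunct. The key fact used throughout is that, by $\forall$-restriction and the reasoning behind Lemma~\ref{thm:exist_guarded_negative}, this remaining game is only played inside $U$.

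For item~\eqref{item:disjun}, let $\struct{A}+\struct{B}$ be given with both components satisfying $\Phi^{\mathrm{CSP}}$. The EP takes as its choice for each $S_i$ with $i<k$ the union of its winning choices on $\struct{A}$ and on $\struct{B}$. Any $N$-clique $U$ with at least two elements lies entirely in $A$ or in $B$, since there are no $N$-edges across the components; singletons trivially lie in one component as well. Restricted to $U$, the global choices coincide with the choices on the relevant component, so the EP continues with that component's strategy.

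For item~\eqref{item:invhom}, let $h\colon\struct{A}\to\struct{B}$ with $\struct{B}\models\Phi^{\mathrm{CSP}}$. On $\struct{A}$, the EP chooses $S_i\coloneqq h^{-1}(S_i^{\struct{B}})$ for each $i<k$. Now let $U$ be an $N$-clique in $\struct{A}$. If $U$ contained an $N$-loop, or two distinct elements $x,y$ with $h(x)=h(y)$, then $h(U)$ would contain an $N$-loop at some point. But $h(U)$ is itself an $N$-clique in $\struct{B}$, and loops are forbidden there, so this is impossible. Hence $h$ is injective on $U$, and the restriction of $h$ to $U$ is a bijection onto $h(U)$ that reflects the chosen $S_i$ exactly and preserves the $\tau$-relations. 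Transporting all later SO moves through this bijection, as in the proof of Lemma~\ref{lem:pos}, turns the EP's winning strategy on $h(U)$ into one on $U$. In that transfer, atoms involving SO variables and equalities agree on both sides, and $\tau$-atoms occur only negatively, so their truth transfers from $h(U)$ back to $U$.

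For item~\eqref{item:evalu}, the reduction maps a $\tau$-structure $\struct{A}$ to its expansion by $N\coloneqq\{(a,b)\in A^2 : a\neq b\}$. If this expansion satisfies $\Phi^{\mathrm{CSP}}$, the case $U=A$ yields $\struct{A}\models\Phi$. Conversely, suppose $\struct{A}\models\Phi$ and the EP plays its winning first existential block. By $\forall$-restriction, the remaining game on any subset $U$ is still won, because the UP's choices on $U$ can be viewed as choices on $A$, exactly as in the substructure-preservation argument, and there are no loops. The main obstacle I expect is item~\eqref{item:invhom}: the negativity of $\Phi$ no longer directly applies once $S_1,\dots,S_{k-1}$ are fixed. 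This is why I pull these variables back by preimage and exploit the injectivity of $h$ on $U$, which the loop condition forces, rather than invoking Lemma~\ref{thm:exist_guarded_negative} as a black box.
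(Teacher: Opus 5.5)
Your proof is correct. Items~(1) and~(3) follow essentially the paper's argument. For~(1), the paper likewise separates the case where $U$ lies inside one component from the case where it meets both. For~(3), it uses the same reduction $\struct{A}\mapsto(\struct{A},{\neq})$ and the same uniformity of the first existential block, which comes from $\forall$-restriction.

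Item~(2) is where you genuinely differ. The paper factors $h$ as a surjective homomorphism followed by an injective one. It disposes of the injective part by asserting that $\Phi^{\mathrm{CSP}}$ is again $\forall$-restricted and negative and invoking Lemma~\ref{thm:exist_guarded_negative}. Only then does it run a strategy transfer for the surjective part, using Observation~\ref{obs:no-loops} to make $h$ injective on $N$-cliques. You run that transfer directly for an arbitrary homomorphism, pulling back the first existential block by $h^{-1}$. This is legitimate: the paper's surjective-case argument never actually uses surjectivity.

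What your route buys:
\begin{itemize}
\item It is shorter and self-contained.
\item It avoids re-checking the syntactic properties of $\Phi^{\mathrm{CSP}}$, which depend on exactly how $(\cdot)_{|U}$ is implemented.
\item It shows that $\forall$-restriction of $\Phi$ is needed only for item~(3). Item~(1) uses nothing about $\Phi$, and item~(2) uses only negativity.
\end{itemize}
The paper's route instead ties the statement back to the preservation lemmas of Section~\ref{section:syntactic}.

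One small correction to your framing: the fact that the remaining game is played inside $U$ comes from the relativization (Observation~\ref{obs:restriction-to-U}), not from $\forall$-restriction of $\Phi$. You genuinely use $\forall$-restriction only in item~(3), to show that the first block chosen on all of $A$ wins for every $U$ simultaneously.
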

It will be convenient to introduce the following two auxiliary notions.
For a $(\tau\cup \{N\})$-structure $\struct A$, we say that $\struct A$ is:
\begin{itemize}
    \item $N$-\emph{surjective} if the $N$-reduct of $\struct A$ does not satisfy $\Psi^{N}_{\neg \mathrm{edge}}$;
    \item $N$-\emph{injective} if the $N$-reduct of $\struct A$ satisfies $\Psi^{N}_{\neg \mathrm{loop}}$.
\end{itemize}   
\begin{observation}\label{obs:no-loops}
If a $\tau\cup\{N\}$-structure $\struct A$ satisfies $\Phi^{\mathrm{CSP}}$, then it satisfies $\Psi^{N}_{\neg \mathrm{loop}}$. 
\end{observation}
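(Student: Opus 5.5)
The plan is to argue by contraposition through the evaluation game for $\Phi^{\mathrm{CSP}}$, using Observation~\ref{obs:restriction-to-U} to reduce everything to a statement about the substructure on the universally chosen set $U$. Suppose that $\struct A$ is a $(\tau\cup\{N\})$-structure that violates $\Psi^{N}_{\neg \mathrm{loop}}$. Then there is an element $a\in A$ with $(a,a)\in N^{\struct A}$. I will show that the UP has a winning strategy for $\Phi^{\mathrm{CSP}}$ on $\struct A$, which gives $\struct A\not\models\Phi^{\mathrm{CSP}}$.

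First, I describe the UP's strategy along the SO quantifier prefix $\mathrm{Q_1}S_1\dots \mathrm{Q_k}S_k \forall U \mathrm{Q_{k+1}}S_{k+1}\dots \mathrm{Q_n}S_n$. For the variables $S_1,\dots,S_k$, the EP answers arbitrarily; whenever a variable is universal, the UP also plays arbitrarily. This includes $S_k$ when $k\le n$; if $k=n+1$, there is no such variable. When the UP reaches $\forall U$, he chooses $\ass{U}\coloneqq\{a\}$. The remaining variables $S_{k+1},\dots,S_n$ are again played arbitrarily by both sides. At the end, all SO variables have been interpreted, and we obtain an expansion $\struct A^+$ of $\struct A$ by $\sigma\cup\{U\}$. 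The FO part is then
\[
\left(\Psi^{N}_{\neg \mathrm{edge}} \lor (\Psi^{N}_{\neg \mathrm{loop}} \land \Psi)\right)_{|U},
\]
which is an FO sentence over $\tau\cup\sigma\cup\{N,U\}$. Observation~\ref{obs:restriction-to-U} applies to it, with the signature enlarged by $\sigma\cup\{N\}$.

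By Observation~\ref{obs:restriction-to-U}, $\struct A^+$ satisfies this restricted sentence if and only if the substructure $\struct C$ of $\struct A^+$ on $\{a\}$ satisfies $\Psi^{N}_{\neg \mathrm{edge}} \lor (\Psi^{N}_{\neg \mathrm{loop}} \land \Psi)$. It is worth double-checking that the disjunct $\forall z\ldotp\neg U(z)$ built into $(\cdot)_{|U}$ is indeed false, which holds because $U^{\struct A^+}=\{a\}\neq\emptyset$. This is why the UP must pick a non-empty $U$. Now $\struct C$ has a single element, so $\Psi^{N}_{\neg \mathrm{edge}}$, which requires two distinct elements, fails. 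Moreover, $(a,a)\in N^{\struct C}$, so $\Psi^{N}_{\neg \mathrm{loop}}$ fails. Hence the whole disjunction fails in $\struct C$, independently of $\Psi$ and of every choice made for the $S_i$. Every play consistent with the UP's strategy therefore ends with the quantifier-free matrix violated, so the UP wins.

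There is no real obstacle here. The only point needing care is that the UP actually gets to choose $U$ regardless of the prefix. This holds because $\forall U$ is inserted unconditionally in $\Phi^{\mathrm{CSP}}$ (also when $k=n+1$), and the argument above never depends on the values of $S_1,\dots,S_n$.
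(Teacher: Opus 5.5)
Your argument is correct and is the same as the paper's: the UP plays $\ass{U}=\{a\}$ for an element $a$ with a loop. The one-element substructure on $\{a\}$ then falsifies both $\Psi^{N}_{\neg \mathrm{edge}}$ and $\Psi^{N}_{\neg \mathrm{loop}}$, so the restricted matrix fails. The only cosmetic difference is that the paper fixes every other universal SO variable to $\emptyset$, whereas you let them be arbitrary; this is equally valid, since the disjunction fails regardless of $\Psi$.
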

\begin{proof}[Proof of Observation~\ref{obs:no-loops}]
We argue the contrapositive. Let $a\in A$ be such that $\struct A\models N(a,a)$. 
    Consider the following strategy of UP for SO variables: $\ass{U} = \{a\}$ and
    $\ass{S} = \emptyset$ for every universal SO variable $S\neq U$. It readily follows from the 
    definitions that the substructure with domain $\ass{U}$ is $N$-surjective  and  does not satisfy $\Psi_{\neg\mathrm{loop}}$.
\end{proof}

\begin{proof} 
We assume that  $\Phi$ is of the form $ \mathsf{Q_1}S_1\dots \mathsf{Q_n}S_n\ldotp\Psi$, where $\Psi$   a FO $(\tau\cup \sigma)$-sentence for $\sigma\coloneqq \{S_1,\dots,S_n\}$.
We verify the three items in the lemma.
For convenience, we set $$\Psi' \coloneqq \Psi^{N}_{\neg \mathrm{edge}} \lor (\Psi^{N}_{\neg \mathrm{loop}} \land \Psi).$$
For~\eqref{item:disjun}, let $\struct A, \struct B \in \fm(\Phi^{\mathrm{CSP}})$ be arbitrary.
    Consider the following strategy of the EP for evaluating $\Phi^{\mathrm{CSP}}$ on
    $\struct{A+B}$ based on the winning strategies of the EP for evaluating
    $\Phi^{\mathrm{CSP}}$ on $\struct{A}$ and on $\struct B$. For a SO variable $S$ (of arity $k$):
    \begin{itemize}
        \item if $S$ is existential and the EP chooses $\ass{S}  \coloneqq X_A\subseteq A^k$ on
        $\struct{A}$  and $\ass{S} \coloneqq X_B\subseteq B^k$ on $\struct{B}$, then the EP
        chooses $\ass{S} \coloneqq X_A\cup X_B \subseteq (A\cup B)^k$ on $\struct{A+B}$;  
        \item if $S$ is universal and the UP chooses $\ass{S}  \coloneqq Y\subseteq
        (A\cup B)^k$ on $\struct{A+B}$, then we consider the situation where the UP chooses
        $\ass{S} \coloneqq Y\cap A^k$ on $\struct{A}$  and $\ass{S}  \coloneqq Y\cap B^k$ on $\struct{B}$. 
    \end{itemize} 
    Denote by $\struct{A}^+$, $\struct{B}^+$, and $\struct{(A+B)}^+$  the expansions of $\struct{A}$, $\struct{B}$,
    and $\struct{A+B}$ resulting from the above evaluations of the SO variables in $\Phi$, respectively.
    We now consider two possible cases for the evaluation of $\ass{U}$ and show that, in both of them, $\struct{A+B} \models \left(\Psi'\right)_{|U}$. 
    
    \emph{Case 1}: $\ass{U}\subseteq A$ or $\ass{U}\subseteq  B$. 
    Without loss of generality, we assume that  $\ass{U}\subseteq A$.   
    Since $\struct A^+ \models  \left(\Psi'\right)_{|U}$, it follows from Observation~\ref{obs:restriction-to-U} that $\struct{(A+B)}^+ \models  \left(\Psi'\right)_{|U}$.
    
    \emph{Case 2:}  $\ass{U}\cap A \neq \varnothing$ and $\ass{U}\cap B \neq \varnothing$.
    In this case, the substructure of $(\struct{A+B})^+$ with domain $\ass{U}$ satisfies $\Psi^{N}_{\neg \mathrm{edge}}$ and hence also $\left(\Psi'\right)_{|U}$.

   For~\eqref{item:invhom}, observe first that $\Phi^{\mathrm{CSP}}$ is logically equivalent to a $\forall$-restricted negative SO sentence.
   Indeed, the addition of $\Psi^{N}_{\neg \mathrm{edge}}$ and $\Psi^{N}_{\neg \mathrm{loop}}$ preserves negativity, and the application of $(\cdot)_{|U}$ ensures $\forall$-restriction.
   Hence,  we conclude via Lemma~\ref{thm:exist_guarded_negative} that $\Phi^{\mathrm{CSP}}$ is preserved under inverse injective homomorphisms.
   Note that every homomorphism $\struct{A} \rightarrow \struct{B}$ decomposes to a surjective homomorphsism $\struct{A} \rightarrow \struct{A}_q$ and an injective homomorphism $\struct{A}_q \rightarrow \struct{B}$.
   It thus suffices to verify that $\Phi$ is preserved under inverse surjective homomorphisms. 
   
   To this end, let $\struct{A}$ and $\struct{B}$ be two finite $\tau$-structures over the signature of $\Phi^{\mathrm{CSP}}$ such that $\struct{B} \models \Phi$ and there exists a surjective homomorphism $h\colon \struct{A} \rightarrow \struct{B}$. 
   We claim that the EP has the following winning strategy on $\struct{A}$ based on the winning strategy of the EP on $\struct{B}$.
   For a SO variable $S$:
    \begin{itemize}
        \item if $S$ is existentially quantified and the EP chooses $X\subseteq B^k$ for $\ass{S}$ on $\struct{B}$,
        then the EP chooses $h^{-1}(X)\subseteq A^k$ for $\ass{S}$ on $\struct{A}$;  
        \item if $S = U$ and the UP reacts by choosing $Y_U\subseteq A$ for $\ass{U}$, then we consider
        the situation where UP chooses $f(Y_U)\subseteq B$ for $\ass{U}$ in $\struct B$;
        \item if $S$ is universally quantified with $S\neq U$ and UP reacts by
        choosing $Y\subseteq A^k$ for $\ass{S}$ on $\struct{A}$, then we consider the situation where
        UP chooses $h(Y \cap Y_U^k)\subseteq B^k$ for $\ass{S}$ on $\struct{B}$. 
    \end{itemize} 
    Denote by $\smash{\struct{A}^+}$ and $\struct{B}^{\smash{+}}$ the expansions of $\struct{A}$ and $\struct{B}$ resulting from the above evaluations of the SO variables in $\Phi$.
    We show that $\struct A^+ \models  \left(\Psi'\right)_{|U}$.
 
 Note that, if $\ass{U}$ does not induce a $N$-surjective structure in $\struct A$, then $\struct A^+\models (\Psi^{N}_{\neg \mathrm{edge}})_{|U}$ and hence also $\struct A^+\models  \left(\Psi'\right)_{|U}$. 
 Hence, we assume that $\ass{U}$ induces a $N$-surjective structure in $\struct{A}$. 
 Note that, every model of $\Phi^{\mathrm{CSP}}$ is a model $\Psi^N_{\neg \mathrm{loop}}$
 (Observation~\ref{obs:no-loops}), and so, every model of $\Phi^{\mathrm{CSP}}$ is $N$-injective. 
 Since the substructure of $\struct{A}$ on $\ass{U}$ is $N$-surjective and $h$ is a homomorphism, we have that $h$ must be injective on $U^{\struct{A}^+}$.
 In sum, we have that $h$ is a bijection on $U^{\struct{A}^+}$.
 In what follows, it will be convenient to identify the sets $U^{\struct A^+}$ and $U^{\struct B^+}$, and simply denote them by $\ass{U}$. 
 Since $\struct B^+\models \left(\Psi'\right)_{|U}$, the EP has a winning strategy to show that the substructure of $\struct B^+$ with domain $\ass{U}$ satisfies $\Psi'$. 
 Now, the EP can mimic this strategy when evaluating $\Psi'$ on the substructure of $\struct A^+$ induced by $\ass{U}$. 
 The fact that this is a winning strategy follows from the following two facts:
 \begin{itemize}
     \item atomic $(\tau\cup\{N\})$-formulas cannot appear positively in $\Psi'$ and, since $h$ is a homomorphism, for every $R\in\tau\cup\{N\}$,  $R^{\struct{A^+}}\cap \ass{U}^k \subseteq R^{\struct{B^+}}\cap \ass{U}^k$ 
     \item by the strategy of EP for SO variables  and the identification $U^{\struct{A}^+} = \ass{U} = U^{\struct{B^+}}$, for every SO variable $S$ the equality $S^{\struct{A}^+}\cap \ass{U}^k = S^{\struct{B}^+}\cap \ass{U}^k$
    holds.
 \end{itemize}
  Since the EP has a winning strategy for when evaluating $\Psi'$ on the substructure of $\struct A^+$ induced by $\ass{U}$, we conclude that $\struct A^+\models \left(\Psi'\right)_{|U}$. 

   For~\eqref{item:evalu}, we show that, for every $\tau$-structure $\struct A$, we have that $$\struct A\models \Phi \quad \text{if and only if} \quad  (\struct A,  {\neq})\models \Phi^{\mathrm{CSP}}.$$
   First, observe that one direction is trivial.
   Indeed, if $(\struct A,  {\neq})\models \Phi^{\mathrm{CSP}}$, then the EP simply projects the winning strategy onto evaluating $\Phi^{\mathrm{CSP}}$ on $\struct{A}$
   for the case where $\ass{U}=A$.
   Now suppose that $\struct A\models \Phi$.
   Since $\Phi$ is $\forall$-restricted, it is preserved under substructures due to Lemma~\ref{thm:exist_guarded_negative}.
   This means that the EP has a winning strategy for evaluating $\Phi$ on any substructure of $\struct{A}$.
   But note that the notion of $\forall$-restriction implies an even stronger property.
   Namely, it follows from Definition~\ref{def:forallguarding} that the EP must have $\mathsf{Q}_1,\dots, \mathsf{Q}_k$-uniform winning strategies for evaluating $\Phi$ on the substructures of $\struct{A}$, in the sense that the winning strategies coincide for the evaluation of the first block of existential SO variables $S_1,\dots, S_k$.
   The reason is that the EP cannot possibly know which substructure the UP will restrict the game to using the universal variables until the first universal SO is played.
   From this observation, it follows immediately that the EP can extend the $\mathsf{Q}_1,\dots, \mathsf{Q}_k$-uniform winning strategy for evaluating $\Phi$ on the substructures of $\struct A$ to a winning strategy for evaluating $\Phi^{\mathrm{CSP}}$ on $(\struct A,  {\neq})$; the choice of $\ass{U}$ done by the UP restricts the substructures of $\struct A$ the UP might want to play on.
   
   Formally, we have the following two cases for a SO variable $S$.
   If $S \in \{S_1,\dots, S_k\}$ and the EP chooses $X\subseteq A^k$ for $\ass{S}$ on $\struct{A}$, then the EP chooses the same set  $X\subseteq A^k$ for $\ass{S}$ on $(\struct A,  {\neq})$.
   Otherwise $U$ precedes $S$. Then:  
        \begin{itemize}
                \item if $S$ is existentially quantified and the EP chooses $\ass{S} =X\subseteq A^k$
                for evaluating $\Phi$ on $\struct A$, then EP chooses $\ass{S} =X\subseteq A^k$ for evaluating $\Phi^{\mathrm{CSP}}$ on $(\struct A, {\neq})$;
                \item if $S\neq U$ is universally quantified and the UP reacts by choosing 
                $\ass{S} =X\subseteq A^k$ for evaluating $\Phi^{\mathrm{CSP}}$ on $(\struct A, {\neq})$,
                then we consider the case when UP chooses $\ass{S} = X \cap \ass{U}^k\subseteq A^k$ for evaluating
                $\Phi$ on $\struct A$. \qedhere
        \end{itemize} 
\end{proof}


\section{From finite-domain QCSPs to $\omega$-categorical CSPs} \label{sect:odd_part}

In the present section, we give a general construction of $\omega$-categorical CSPs from bounded alternation finite-domain QCSPs.
%
Let $\struct{B}$ be a finite structure with the relational signature $\tau$ and domain $B=\{b_1,\dots, b_{\ell}\}$.
Also, fix an FO quantifier-prefix $\mathrm{Q}_1\bar{x}_1\cdots \mathrm{Q}_n\bar{x}_n$ for some $n\geq 1$, where $\mathrm{Q}_i\in \{\exists,\forall\}$ for every $i\in [n]$.
We define a new signature $\tau'$ consisting of the symbols from $\tau$, and, for every $i\in [n]$,
the unary symbol $\temp{E}_i$ or $\temp{A}_i$, depending on whether $\mathrm{Q}_i$ equals $\exists$ or $\forall$.
We shall refer to the $i$th symbol $\temp{E}_i$ or $\temp{A}_i$ by $\temp{Q}_i$.
Consider the MSO $\tau'$-sentence 
$$\Phi_{\struct{B}} \coloneqq \mathrm{Q}_1 \temp{Q}_{1,b_1} \cdots   \mathrm{Q}_1\temp{Q}_{1,b_{\ell}}   
 \cdots \mathrm{Q}_n \temp{Q}_{n,b_1} \cdots  \mathrm{Q}_n\temp{Q}_{n,b_{\ell}} \left( \Psi_\forall \lor  (\Psi_\exists \land \Psi_{\struct{B}}) \right),$$
where $\Psi_\forall$, $\Psi_{\exists}$,  and $\Psi_{\struct{B}}$ are FO formulas defined below.
Intuitively,  $\temp{Q}_{k,b}$ for $k\in [n]$ and $b\in B$ represents the set of all variables $v$ in an instance of $\mathrm{Q}_1\cdots \mathrm{Q}_n$$\QCSP(\struct{B})$ for which the player controlling $\mathrm{Q}_k$ has chosen $\ass{x}\coloneqq b$. 
We now give definitions of $\Psi_\forall$,  $\Psi_{\exists}$,  and $\Psi_{\struct{B}}$.

First, $\Psi_{\forall}$ states that the UP did not  play by the rules:
   \begin{align} 
    \Psi_\forall\coloneqq \bigwedge\nolimits_{i\in [n]:\, \temp{Q}_i=\temp{A}_i}  \bigvee\nolimits_{b\in B}\bigvee\nolimits_{c\in B\setminus \{b\}}  \exists x \big(\lnot A_i(x) \land   A_{i,b}(x) \big) \vee \exists x \big( A_{i,b}(x)\land A_{i,c}(x)\big). \label{eq:universal_chooses_correctly}
    \end{align} 
     
%
%
The sentence $\Psi_{\exists}$ states that the EP   plays by the rules:
 \begin{align}
    \Psi_\exists  \coloneqq \bigwedge\nolimits_{i\in [n]:\,\temp{Q}_i=\temp{E}_i}  \forall x   \left(\neg \temp{E}_i(x) \vee \bigvee\nolimits_{b\in B} \temp{E}_{i,b}(x)\right).
    \label{eq:existential_chooses_correctly}
   \end{align}  

Finally, $\Psi_{\struct{B}}$ states that either the UP has cheated---by violating the converse of eq.~\eqref{eq:universal_chooses_correctly}---or the $\mathrm{Q}_i$s and $Q_{i,b}$s are compatible with the relations of $\struct{B}$:
        \begin{align} 
     \Psi_{\struct{B}} = \bigwedge\nolimits_{R\in \tau \text{ with arity }m}  \bigwedge\nolimits_{(i_1,\dots, i_m)\in  
        [n]^m} \forall \bar{x} \Big(     \psi_{\tau'}   \vee \psi_{\forall}  \vee \psi_{\exists}    \Big),
  \label{eq:main_part}
    \end{align} 
    where $\psi_{\tau'}$, $\psi_{\forall}$, and $\psi_{\exists}$ are shortcuts for the following formulas:
    \begin{align*}
        \psi_{\tau'}   &  \coloneqq \neg R(x_1,\dots, x_m)  \vee \bigvee\nolimits_{k\in [m]} \lnot Q_{i_k}(x_k), \\
         \psi_{\forall} & \coloneqq  \bigvee\nolimits_{k\in [m]: \, Q_{i_k} = A_{i_k}}\bigwedge\nolimits_{b\in B} \neg A_{i_k,b}(x_k) , \\
         \psi_{\exists}  & \coloneqq   \bigvee\nolimits_{(t_1,\dots,t_m)\in R^{\struct{B}}}  \bigwedge\nolimits_{k\in [m]} Q_{i_k,t_k}(x_k).  
    \end{align*}


\begin{restatable}{lemma}{qcsp}      \label{lem:QCSP}  
\begin{enumerate}
    \item \label{item:preserv} $\Phi_{\struct{B}}$ is $\forall$-restricted and negative.
    \item \label{item:reduct} $\mathrm{Q}_1\cdots \mathrm{Q}_n$$\QCSP(\struct{B})$  reduces in polynomial-time to model-checking for $\Phi_{\struct{B}}$. 
\end{enumerate}
 
\end{restatable}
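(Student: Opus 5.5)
For item~\eqref{item:preserv}, I will check the two syntactic properties directly on the normal form of $\Psi_\forall \lor (\Psi_\exists \land \Psi_{\struct{B}})$, using the fact that both notions are closed under internal conjunctions and disjunctions. This lets me treat $\Psi_\forall$, $\Psi_\exists$ and $\Psi_{\struct B}$ separately under the common SO prefix $\mathrm{Q}_1\temp{Q}_{1,b_1}\cdots\mathrm{Q}_n\temp{Q}_{n,b_\ell}$.

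\emph{Negativity.} I will check where atoms of $\tau'$ occur in each formula. The only $\tau$-atoms are the $R(\bar x)$ in $\psi_{\tau'}$, and they occur negatively. The unary symbols $A_i$ and $E_i$ are also signature symbols. They occur negatively in $\Psi_\exists$ and in $\psi_{\tau'}$, but $A_i$ occurs positively inside $\Psi_\forall$ as $\lnot A_i(x)$. This looks like a problem, so I will reread it: in $\Psi_\forall$ the atom appears as the conjunct $\lnot A_i(x)$, which is a negative occurrence. All other atoms carry SO variables $Q_{i,b}$, and those are unrestricted. So every $\tau'$-atom occurs negatively.

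\emph{$\forall$-restriction.} I will bring each component into DNF and inspect the existentially quantified FO variables. $\Psi_\exists$ and $\Psi_{\struct B}$ contain only universal FO variables, so the condition holds vacuously. In $\Psi_\forall$, every existential $x$ occurs in a clause $\exists x(\lnot A_i(x)\land A_{i,b}(x))$ or $\exists x(A_{i,b}(x)\land A_{i,c}(x))$. Each such clause contains the conjunct $A_{i,b}(x)$, and $A_{i,b}$ is universally quantified since $\mathrm{Q}_i=\forall$. This is precisely the guard required by Definition~\ref{def:forallguarding}. The one point needing care is that after distributing into DNF, every disjunct still contains such a guard for each of its existential variables. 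This holds because each existential variable is local to its own small conjunction.

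For item~\eqref{item:reduct}, I will give the reduction explicitly. Take an instance $\mathrm{Q}_1\bar x_1\cdots\mathrm{Q}_n\bar x_n\,\phi$ of $\mathrm{Q}_1\cdots\mathrm{Q}_n\QCSP(\struct B)$ and map it to the $\tau'$-structure $\struct A$ built as follows. The domain is the set of variables. Each $R\in\tau$ is interpreted by the constraint tuples. Each $Q_i$ is interpreted as the block $\bar x_i$, and I may assume the blocks are disjoint. This construction is clearly polynomial time. I then need to show that the QCSP instance is true if and only if $\struct A\models\Phi_{\struct B}$, and I will compare the two evaluation games block by block.

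Forward direction: a play of the UP in the QCSP game, assigning values to $\bar x_i$, corresponds to the partition $A_{i,b}=\{v\in\bar x_i : \ass v = b\}$. Conversely, an SO choice of the UP that violates these rules, being not contained in $A_i$ or not disjoint, makes $\Psi_\forall$ true. An SO choice that fails to cover $A_i$ is harmless, because the uncovered elements make $\psi_\forall$ true in every relevant constraint. So in that case I will let the UP's QCSP move assign arbitrary values on the uncovered positions. The EP translates its QCSP strategy into the choice $E_{i,b}=\{v\in\bar x_i:\ass v=b\}$, which satisfies $\Psi_\exists$.

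Backward direction: from an SO strategy of the EP I extract a QCSP strategy. I set $\ass v$ to any $b$ with $v\in E_{i,b}$, which exists by $\Psi_\exists$. Each proper UP play in the QCSP game corresponds to a proper UP SO choice, so $\Psi_\forall$ fails and $\Psi_{\struct B}$ must hold. For a constraint $R(v_1,\dots,v_m)$ with $v_k\in\bar x_{i_k}$, the formulas $\psi_{\tau'}$ and $\psi_\forall$ are false, so $\psi_\exists$ gives a tuple of $R^{\struct B}$ matching the chosen values.

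The main obstacle is that the EP may choose several colours $E_{i,b}$ for the same element, so the chosen tuple $(t_k)$ need not equal the extracted assignment. I will handle this first by having the EP shrink its choices to singletons. This is safe because shrinking the $E_{i,b}$ only affects $\psi_\exists$, and the question is whether it can falsify it. If that does not work cleanly, I will instead use the fact that $\psi_\exists$ is satisfied by some selection and fix the selection per variable consistently. If consistency across constraints fails, I will add the clauses $\lnot E_{i,b}(x)\lor\lnot E_{i,c}(x)$ to $\Psi_\exists$, which keeps the formula negative and $\forall$-restricted.
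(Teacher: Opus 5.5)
\textbf{Verdict.} Item (1) is correct and is exactly the paper's argument. (The sentence claiming that $A_i$ occurs positively in $\Psi_\forall$ should just be deleted: every occurrence of $R$, $A_i$, $E_i$ is negated, and each existential $x$ in $\Psi_\forall$ is guarded by $A_{i,b}(x)$ with $A_{i,b}$ universally quantified.) Your reduction for item (2) is also the paper's. There is, however, a genuine gap in the backward direction, and it cannot be closed for $\Phi_{\struct B}$ as written.

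\textbf{The gap.} It is the one you identify and then leave open. $\Psi_\exists$ only forces \emph{at least} one colour, and the witness tuple in $\psi_\exists$ is chosen separately for each constraint. So an existential element placed in several $E_{i,b}$ can take different values in different constraints. Neither of your first two remedies can work. Every $E_{i,b}$ occurs only positively in the matrix, so shrinking the $E_{i,b}$ can only destroy witnesses for $\psi_\exists$. A consistent selection also need not exist. Take $\struct B=(\{0,1\};Z,O)$ with $Z^{\struct B}=\{0\}$, $O^{\struct B}=\{1\}$, and the false instance $\forall y\,\exists x\,(Z(x)\wedge O(x))$. On its encoding the EP wins by setting $E_{2,0}=E_{2,1}=\{x\}$, whatever the UP does.

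In fact, by the same monotonicity the EP may always put every element into every $E_{i,b}$. After that, the UP refutes the sentence iff it can refute a single clause of $\Psi_{\struct B}$, which involves boundedly many elements. So model-checking $\Phi_{\struct B}$ \emph{as written} is in polynomial time, and item (2) fails for, e.g., $\forall\exists$-3-CNF unless $\mathrm{P}=\mathrm{NP}$. The paper's proof has the same hole. In its ``$\Leftarrow$'' direction it simply asserts that once $\Psi_\exists$ holds, the EP's colouring ``correctly represents an assignment''.

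\textbf{The fix.} Your third option is not a fallback but the necessary correction. Add $\forall x\,(\lnot E_{i,b}(x)\lor\lnot E_{i,c}(x))$ to $\Psi_\exists$ for every existential level $i$ and all $b\neq c$.
\begin{itemize}
\item These clauses contain no $\tau'$-atoms and no existential FO variables, so item (1) survives and the application of the CSP hammer is unaffected.
\item The EP's singleton choice in your forward direction satisfies them.
\item Your backward argument then goes through verbatim, because the witness tuple in $\psi_\exists$ is forced to coincide with the extracted assignment.
\end{itemize}
Make this the argument, and say explicitly that you are proving the lemma for the corrected sentence.

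\textbf{A smaller repair in the forward direction.} $\Psi_\forall$ is a \emph{conjunction} over the universal levels, so a UP violation at one level does not make it true. (With no universal level at all it is the empty conjunction, which makes $\Phi_{\struct B}$ valid; so the lemma tacitly needs the universal quantifier assumed in Theorem~\ref{thm:qcsp_to_csp}.) You can either replace that $\bigwedge$ by $\bigvee$, or let the EP cope directly:
\begin{itemize}
\item An element put into some $A_{i,b}$ outside $A_i$ is harmless, since $\psi_{\tau'}$ absorbs every clause that uses it at index $i$.
\item An element of $A_i$ with several colours should be simulated as taking one of them. This leaves the corresponding disjunct of $\psi_\forall$ false, as it would be anyway, and keeps the witnessing disjunct of $\psi_\exists$ true.
\end{itemize}
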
  
\begin{proof} 
For~\eqref{item:preserv}, we use the fact that all involved notions are closed under internal conjunctions and disjunctions. 
For negativity, it is enough to check that every $\tau'$-atom other than equality appears under the scope of an odd number of negations.
We continue with $\forall$-restriction.
That $\Psi_{\forall}$ is $\forall$-restricted can be checked directly using Definition~\ref{def:forallguarding}. 
That $\Psi_{\exists}$ and $\Psi_{\struct{B}}$ are $\forall$-restricted follows from the fact that the sentences do not contain any existentially quantified FO variables. 

For~\eqref{item:reduct}, we use the following reduction.
Given an instance $\Phi\coloneqq \mathsf{Q}_1 \bar{x}_1 \cdots \mathsf{Q}_n \bar{x}_{n}\ldotp \phi$ of $\QCSP(\struct{B})$, we obtain the instance 
$ 
\Psi \coloneqq \exists \bar{x}_1 \cdots \exists \bar{x}_{n}\ldotp \psi
$ 
of $\CSP(\struct{D})$, where $\psi$ is the conjunction of the following atomic $\tau'$-formulas:
\begin{itemize}  
    \item $\temp{Q}_i(v)$ if and only if $v$ appears in $\bar{x}_i$; 
    \item $R(v_1,\dots, v_n)$ for $R\in \tau$ if and only if this atom appears in $\phi$.
\end{itemize}
Let $\struct{A}$ be the representation of $\Psi$ as a relational $\tau'$-structure: the domain  $A$ of $\struct{A}$ consists of the variables in $\Psi$, and $\bar{a}\in R^{\struct{A}}$ if and only if $R(\bar{a})$ is an atomic formula in $\Psi$.

``$\Rightarrow$'' Suppose that $\Phi$ is a YES-instance of $\QCSP(\struct{B})$.
Before defining a strategy for the EP, notice that $\Psi_\forall$ says that if the evaluation
of universal SO variables do not satisfy the inclusion  $\ass{A_{i,b}}\subseteq A_i$, then
EP can trivially win the FO part. So we assume that $\ass{A_{i,b}}\subseteq A_i$, and
consider the following strategy of the EP evaluating the SO variables of $\Phi_{\struct{B}}$ on
$\struct{A}$ based on the winning strategy of the EP on $\Phi$:
    \begin{itemize}
        \item the EP evaluates SO variables on $\struct{A}$: $\ass{\temp{E}_{i,b}} \coloneqq \{v \in \bar{x}_i \mid \ass{v} = b \text{ on } \Phi \}$;  
        \item if the UP reacts by evaluating $\ass{\temp{A}_{i,b}} = X\subseteq A$, then the EP considers 
        the move of the UP on the QCSP instance where: $\ass{v} \coloneqq b$ if and only if
        $v\in \ass{\temp{A}_{i,b}}$ on $\struct{A}$---if this is not a valid move for a
        variable $b$, i.e.,  $v$ that gets assigned no value $b\in B$ or two different values
        $b\neq b'$, then EP chooses an arbitrary value $\ass{v} \in B$.
    \end{itemize}
  Let $\smash{\struct{A}^+}$ be an expansion of $\struct{A}$ resulting from the above evaluations of the SO variables in $\Phi_{\struct{B}}$.
  We show that $\smash{\struct{A}^+}$ satisfies the FO part of $\Phi_{\struct{B}}$, and it suffices
  to show that it satisfies $\Psi_\exists \land \Psi_{\struct{B}}$.
  Clearly, we have $\smash{\struct{A}^+}\models  \Psi_{\exists}$ because the EP assigns a value to every existential variable on $\Phi$.
    We must show that $\smash{\struct{A}^+}\models \Psi_{\struct{B}}$.
    Whenever $\psi_{\tau'}$ 
    is satisfied, we have the following two cases.
    Either the UP has not assigned values to the $\temp{A}_{i,b}$s in a way that would correctly represent an assignment of values to the universal variables in $\Phi$, in which case $\psi_{\forall}$ 
    holds, or $\psi_{\exists}$ holds 
    because the strategy of the EP on $\struct{A}$ reflects
    a winning strategy of the EP on $\Phi$. 
  
``$\Leftarrow$'' Suppose that $\Phi$ is a NO-instance of $\QCSP(\struct{B})$.
Consider the following strategy of the UP for evaluating $\Phi_{\struct{B}}$ on $\struct{A}$ based on the winning strategy of the UP on $\Phi$:
    \begin{itemize}
        \item the UP evaluates SO variables on $\struct{A}$:   $\ass{\temp{A}_{i,b}} \coloneqq \{v \in \bar{x}_i \mid \ass{v} = b \text{ on } \Phi \}$;
        \item the EP evaluates FO variables on $\Phi$: $\ass{v} \coloneqq b$ if and only if $v\in \ass{\temp{E}_{i,b}}$ on $\struct{A}$.
    \end{itemize}
 Let $\smash{\struct{A}^+}$ be an expansion of $\struct{A}$ resulting from the above evaluations of the SO variables in $\Phi_{\struct{B}}$.
 We show that $\smash{\struct{A}^+}$ does not satisfy the FO part of $\Phi_{\struct{B}}$.

 Either eq.~\eqref{eq:existential_chooses_correctly} does not hold, in which case the EP loses immediately, or the assignment of values to the existentially quantified SO variables in $\Phi_{\struct{B}}$ on $\struct{A}$ correctly represents an assignment of values to the existentially quantified variables in $\Phi$; we assume the latter.
 Regarding $\Psi_{\forall}$,  eq.~\eqref{eq:universal_chooses_correctly} holds by the strategy of the UP on $\struct{A}$; hence, $\smash{\struct{A}^+} \models \Psi_{\forall}$. 
 We show that $\smash{\struct{A}^+} \centernot{\models} \Psi_{\struct{B}}$. 
 By the strategy of the UP on $\struct{A}$, if $\psi_{\tau'}$ does not hold, then $ \psi_{\forall}$ does not hold either.
 %
 But then $\psi_{\exists}$ cannot be satisfied uniformly across all relations because the strategy of the UP on $\struct{A}$ reflects a winning strategy of the UP on $\Phi$. 
\end{proof}  

\begin{proof}[Proof of Theorem~\ref{thm:qcsp_to_csp}]  
Consider the MSO sentence $\Phi_{\struct{B}}$.
By Lemma~\ref{lem:QCSP}\eqref{item:preserv}, we have that $\Phi_{\struct{B}}$ is $\forall$-restricted and negative.
Hence, the prerequisites of Lemma~\ref{lem:CSP_hammer} are satisfied for $\Phi_{\struct{B}}$.
By items~\ref{item:disjun} and~\ref{item:invhom} of Lemma~\ref{lem:CSP_hammer}, $\Phi_{\struct{B}}^{\mathrm{CSP}}$ is preserved by disjoint unions and inverse homomorphisms.
By construction, $\Phi_{\struct{B}}^{\mathrm{CSP}}$ is a $\mathrm{Q}_1\cdots \mathrm{Q}_n$MSO sentence.
It follows from Theorem~\ref{theorem:omega_cat_CSPs} that there 
exists an $\omega$-categorical structure $\struct{D}$ with $\CSP(\struct{D})=\fm(\Phi^{\mathrm{CSP}}_{\struct{B}})$.
By Lemma~\ref{lem:QCSP}\eqref{item:reduct}, $\mathrm{Q}_1\cdots \mathrm{Q}_n$$\QCSP(\struct{B})$  reduces in polynomial-time
to model-checking for $\Phi_{\struct{B}}$. 
It follows from item~\ref{item:evalu} of Lemma~\ref{lem:CSP_hammer} that model-checking for $\Phi_{\struct{B}}$ reduces in polynomial-time to model-checking for $\Phi_{\struct{B}}^{\mathrm{CSP}}$. 
\end{proof}  
\begin{proof}[Proof of Theorem~\ref{thm:odd_part}]  As already mentioned, Theorem~\ref{thm:odd_part} follows from Theorem~\ref{thm:qcsp_to_csp}, and from the fact
that there are $(\exists\forall)^n\exists$- and $\forall(\exists\forall)^n\exists$-QCSPs
complete for the $\Sigma^{\PO}_{2n+1}$-level and the $\Pi^{\PO}_{2n}$-level of the polynomial hierarchy ($n\geq 1$);
e.g., $(\exists\forall)^n\exists$- and $\forall(\exists\forall)^n\exists$-3CNF.
\end{proof}

\section{$\omega$-categorical CSPs complete for $\Sigma^{\mathrm{P}}_{2n}$ and $\Pi_{2n+1}^\PO$}
\label{sect:even_part}

In the present section, we give a construction of $\Sigma^{\mathrm{P}}_{2n}$-complete
$\omega$-categorical CSPs from the complement of $(\forall\exists)^n$3-CNF. 
There is an analogous construction of $\Pi^{\mathrm{P}}_{2n+1}$-complete
$\omega$-categorical CSPs from the complement of $\exists(\forall\exists)^n$3-CNF; for the sake of simplicity, we choose to only provide details for the case of $(\forall\exists)^n$3-CNF.
%
  
Fix $n\geq 1$. 
The strategy is to construct an $\exists$-guarded positive MSO $\tau$-sentence $\Phi_{\ast}$ such that the SO quantifier prefix of $\Phi_{\ast}$ is $(\forall\exists)^n$ and $(\forall\exists)^n$3-CNF reduces in polynomial-time to model-checking for $\Phi_{\ast}$.
Then we hit $\neg \Phi_{\ast}$ with the CSP hammer.
%

We define the signature $\tau$ consisting of the binary symbols $R$, $\smash{\overline{R}}$,
and $\succ$, and the unary symbols $S$, $T$,  $E_1,\dots, E_n$, and  $A_1,\dots, A_n$.
Now, we set
\[
\Phi_{\ast} \coloneqq \forall\temp{A}_{1,0}\exists \temp{E}_{1,1}\ldots \forall\temp{A}_{n,0}\exists \temp{E}_{n,1}\exists V \big(\lnot\Psi_\forall  \lor (\Psi_\exists \land    \Psi_{|V})\big),
\]
where $V$ is a fresh unary symbol,  and $\Psi_{\forall}$, $\Psi_{\exists}$, $\Psi$ are FO formulas defined below.
At a high-level, $\Psi_\forall$ and $\Psi_\exists$ are sets of FO constraints that
UP and the EP must follow in their SO evaluations strategies. Following this intuition, the FO part of $\Phi_\ast$ says that, whenever the UP plays by the rules $\Psi_\forall$, then EP must play by the rules $\Psi_\exists$ and satisfy $\Psi$ in the substructure with domain $\ass{V}$. 

Formally, $\Psi_\forall$ it is the conjunction of the following sentences:
\begin{align}
 \forall x \big(\neg \temp{A}_{k,0}(x) \vee \temp{A}_{\ell,0}(x) \big) & \text{ for all }  k,\ell\in [n] \text{ with } k < \ell, \label{eq:the UP_consistent_upwards_across_prefix}     \\ 
 \forall x \big(\neg \temp{A}_{\ell,0} (x) \vee \neg\temp{A}_k (x) \vee \temp{A}_{k,0}(x)\big) & \text{ for all }   k,\ell\in [n] \text{ with } k < \ell, \label{eq:the UP_consistent_downwards_across_prefix} \\ 
  \forall x \big( \neg \temp{A}_{k,0} (x) \vee \lnot\temp{A}_{\ell}(x) \big) & \text{ for all }  k,\ell\in [n] \text{ with } k < \ell, \label{eq:the UP_does_not_choose_to_soon} \\ 
  \forall x \big( \neg \temp{A}_{n,0}(x)\vee \lnot\temp{E}_k(x) \big) & \text{ for each } k\in[n], \label{eq:the UP_does_not_choose_existentially} \\ 
   \forall x,y \big( \neg\temp{A}_{n,0}(x) \vee \neg \temp{A}_{n,0}(y) \vee \neg \smash{\overline{R}}(x,y) \big).& \label{eq:the UP_consistent_with_negated_pairs}
\end{align}

Now,  $\Psi_{\exists}$ is the  conjunction of the following formulas:
\begin{align}
   \forall x &\bigwedge\nolimits_{1\leq k<\ell\leq n}  \left(\lnot E_{k,1}(x)\lor E_{\ell,1}(x)\right), \label{eq:the EP_consistent_upwards_across_prefix} \\
   \forall y &\bigwedge\nolimits_{k\in[n]} \big( \lnot E_{k,1}(y) \lor \bigvee\nolimits_{j\in[k]}E_j(y)\big), \label{eq:the EP_only_chooses_when_its_time} \\
   \forall w,z &  \left( \lnot E_{n,1}(w) \lor \lnot E_{n,1}(z) \lor R(w,z)\right).  \label{eq:the EP_consistent_with_negated_pairs} 
\end{align} 


%

Finally, $\Psi$ is the conjunction of the following formulas: 
\begin{align} 
\forall x,y\ldotp R(x,y) \label{eq:satisf}\\
  \exists x,y \big(S(x) \land T(y)\big),  \label{eq:start}  
\\ 
\forall x \big(\lnot \temp{A}_{n,0}(x) \vee  \smash{\bigvee\nolimits_{k\in [n]}\temp{E}_{k,1}(x)}  \big), \label{eq:consist} 
\\
  \forall x,y \, \exists a,b \big((x=y) \vee  
 \big( (x\prec a) \wedge (b\prec y)  \big)\big).
  \label{eq:succ}   
\end{align} 
 
\begin{restatable}{lemma}{sat}      \label{lem:SAT}
\begin{enumerate}
    \item \label{item:preserv2} $\Phi_{\ast}$ is logically equivalent to an $\exists$-guarded positive sentence.
    \item \label{item:reduct2} $(\forall\exists)^n$3-CNF reduces in polynomial-time to model-checking for $ \Phi_{\ast}$. 
\end{enumerate}
 
\end{restatable}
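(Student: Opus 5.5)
The plan is to verify the two items separately. Item~\eqref{item:preserv2} is a syntactic check carried out component by component, using closure of $\exists$-guarding and positivity under internal conjunctions and disjunctions. Item~\eqref{item:reduct2} is a gadget reduction whose correctness is argued through the evaluation game, in the style of the proof of Lemma~\ref{lem:QCSP}\eqref{item:reduct}.

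\textbf{Item~\eqref{item:preserv2}.} All three pieces $\lnot\Psi_\forall$, $\Psi_\exists$ and $\Psi_{|V}$ live under the same SO prefix $\forall A_{1,0}\exists E_{1,1}\cdots\forall A_{n,0}\exists E_{n,1}\exists V$. So it suffices to check each normalized conjunct or disjunct on its own.

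\emph{The part $\lnot\Psi_\forall$.} Pushing the negation inside, $\lnot\Psi_\forall$ becomes a disjunction of existential FO sentences. Since it has no universally quantified FO variables, $\exists$-guarding holds vacuously. The $\tau$-atoms $A_k$, $E_k$ and $\overline R$ occur negatively in~\eqref{eq:the UP_consistent_downwards_across_prefix}--\eqref{eq:the UP_consistent_with_negated_pairs}, so they become positive after negation.

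\emph{The part $\Psi_\exists$.} Every universal FO variable in~\eqref{eq:the EP_consistent_upwards_across_prefix}--\eqref{eq:the EP_consistent_with_negated_pairs} occurs in its clause inside a disjunct $\lnot E_{k,1}(\cdot)$, and $E_{k,1}$ is existentially quantified. This is exactly the guard required by Definition~\ref{def:exist_guard}. The $\tau$-atoms $E_j$ and $R$ occur positively.

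\emph{The part $\Psi_{|V}$.} The restriction operator (the dual of~\eqref{eq:syntactic_restriction}, with $V$ existential) adds $\lnot V(x)$ to every clause for each universal FO variable $x$, and the extra disjunct $\forall z\,\lnot V(z)$ is guarded by $\lnot V(z)$. Inside $\Psi$, the atoms $R$, $S$, $T$ and $\prec$ occur only positively; $A_{n,0}$ and $E_{k,1}$ are SO variables, and equality is allowed. To make this rigorous I will rewrite each piece into CNF and apply Definition~\ref{def:exist_guard} and the definition of positivity directly.

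\textbf{Item~\eqref{item:reduct2}.} Let an instance $\forall\bar x_1\exists\bar y_1\cdots\forall\bar x_n\exists\bar y_n\,\bigwedge_j C_j$ be given. I will build a $\tau$-structure $\struct A$ whose elements are the literal occurrences, organised into one layer per clause $C_j$. The unary predicates $A_k$ and $E_k$ mark occurrences of variables from $\bar x_k$ and $\bar y_k$, respectively. Complementary occurrences (of $v$ and $\lnot v$) are joined by $\overline R$, and consistent occurrences by $R$. The relation $\prec$ links every element of layer $j$ to every element of layer $j+1$. Finally $S$ marks the first layer and $T$ the last.

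The intended game runs as follows. The UP's set $A_{k,0}$ encodes which universal literal occurrences are made false. The rules~\eqref{eq:the UP_consistent_upwards_across_prefix}--\eqref{eq:the UP_consistent_with_negated_pairs} force this choice to be monotone across rounds, made at the right time, and consistent across complementary pairs. Symmetrically, the EP's sets $E_{k,1}$ encode which existential occurrences are true, and~\eqref{eq:the EP_consistent_upwards_across_prefix}--\eqref{eq:the EP_consistent_with_negated_pairs} impose the analogous rules. The last set $V$ then selects one true literal occurrence per clause. Conditions~\eqref{eq:start} and~\eqref{eq:succ} force $V$ to meet every layer from $S$ to $T$. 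Condition~\eqref{eq:satisf} forces the selected occurrences to be pairwise consistent, and~\eqref{eq:consist} forbids selecting a universal literal that has been made false.

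For correctness in the ``yes'' direction, I translate a winning QBF strategy into the EP's choices of the $E_{k,1}$ (round by round, with $V$ chosen at the end), exactly as in Lemma~\ref{lem:QCSP}. For the ``no'' direction, the UP plays its QBF-winning assignment through the $A_{k,0}$, obeying $\Psi_\forall$. Then no choice of $V$ can pick a consistent true literal in every clause.

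\textbf{Main obstacle.} I expect the hardest step to be the soundness of the encoding against \emph{cheating}. Because the SO sets range over literal occurrences rather than over variables, the argument must show two things. First, every EP play satisfying $\Psi_\exists$ and $\Psi_{|V}$ induces a genuine consistent truth assignment. Second, every UP play satisfying $\Psi_\forall$ is the encoding of some legitimate universal assignment. I would establish both by case analysis on~\eqref{eq:the UP_consistent_downwards_across_prefix}--\eqref{eq:the EP_consistent_with_negated_pairs}, checking at each point that an illegal move immediately hands the win to the other player.
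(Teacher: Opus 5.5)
Your item (1) is the paper's argument and is fine. For item (2) you use the paper's encoding (one layer of literal occurrences per clause, $S$ and $T$ on the first and last layer, $R$ and $\overline{R}$ for non-complementary and complementary pairs) and the same transfer of strategies. The gap is in the ``no'' direction, and the paper's own sketch has the same hole: its claim that any $V$ with $(\xi(\bar a),V)\models\Psi_{|V}$ witnesses that $\bar a$ is satisfying, and the closing ``push and pull'' sentence.

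\textbf{The empty witness.} As you observe yourself in item (1), $\Psi_{|V}$ contains the disjunct $\forall z\,\neg V(z)$. So the EP can play $E_{k,1}=\emptyset$ for every $k$ and then $V=\emptyset$. This satisfies $\Psi_\exists$, since each of its clauses contains a literal $\neg E_{k,1}$, and it satisfies $\Psi_{|V}$ via the emptiness disjunct. It therefore beats every play of the UP, so $\Phi_{\ast}$ as written holds in every finite $\tau$-structure. Your observation that no $V$ picks a consistent true literal in every clause does not make the EP lose.

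\textbf{What \eqref{eq:succ} actually says.} It does not say that $V$ meets every layer from $S$ to $T$. For distinct $x,y\in V$ it asks for a $\prec$-neighbour in $V$ of $x$ on one side and of $y$ on the other. Once $|V|\ge 2$, every element of $V$ needs neighbours in both adjacent layers, which the first- and last-layer elements required by \eqref{eq:start} cannot have. With $m\ge 2$ clauses there is no non-empty witness at all.

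\textbf{EP cheating is not penalised.} Even with $V\neq\emptyset$ enforced and \eqref{eq:succ} read as intended, your ``main obstacle'' claim is false: not every EP play obeying $\Psi_\exists$ and $\Psi_{|V}$ encodes a genuine assignment. So the planned case analysis cannot succeed, for two reasons.
\begin{itemize}
\item \eqref{eq:consist} is $\neg A_{n,0}(x)\vee\bigvee_k E_{k,1}(x)$. It holds automatically for existential occurrences, because a legal UP never puts them into $A_{n,0}$. Hence $V$ may select existential literals regardless of the $E_{k,1}$.
\item The EP rules are not analogous to the UP rules: $\Psi_\exists$ has no counterpart of \eqref{eq:the UP_consistent_downwards_across_prefix}. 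The EP may therefore leave block $k$ open and settle it only in $E_{n,1}$, after $A_{n,0}$ has been played.
\end{itemize}
Either loophole collapses the game to $\forall\bar x_1\cdots\forall\bar x_n\,\exists\bar y_1\cdots\exists\bar y_n$. For example, $\forall x_1\exists y_1\forall x_2\exists y_2\,(y_1\vee\neg x_2)\wedge(\neg y_1\vee x_2)$ is false but would be accepted.

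\textbf{A possible repair.} What is missing is a version of $\Phi_{\ast}$ that closes these loopholes while keeping positivity and $\exists$-guarding. For instance:
\begin{itemize}
\item Relativize $\Psi$ to $V$ without the emptiness disjunct.
\item Replace \eqref{eq:consist} by the clauses $\neg V(x)\vee\neg A_{n,0}(x)$ and $\neg V(x)\vee E_{n,1}(x)\vee\bigvee_j A_j(x)$.
\item For $k<\ell$, add $\neg E_{\ell,1}(x)\vee E_{k,1}(x)\vee\bigvee_{j\neq k}E_j(x)\vee\bigvee_j A_j(x)$. This is a positive substitute for the forbidden literal $\neg E_k(x)$, and it is correct on $\struct{A}$ because every element carries exactly one of the types $A_j,E_j$.
\item Replace \eqref{eq:succ} by the requirement that every element of $V$ outside $T$ is $\succ$-related to an element of $V$ in the next layer.
\end{itemize}
With these changes your round-by-round transfer of strategies goes through.
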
  
\begin{proof} For~\eqref{item:preserv2}, we use the dualities from Lemma~\ref{lem:dualizing_positivity} and Lemma~\ref{lem:dualizing_exists_guarding}, as well as 
the fact that all involved notions are closed under internal conjunctions and disjunctions. 
For positivity, it is enough to check that every $\tau$-atom appears under the scope of an even number of negations.
Indeed, every $\tau$-atom appears negated in $\Psi_{\forall}$, 
and non-negated otherwise.
We continue with $\exists$-guarding.
Regarding $\Psi_\forall$, a simple application of De Morgan's laws shows that there is sentence $\overline{\Phi_{\forall}}$ logically equivalent to $\neg \Psi_\forall$ and such that
$$
\forall\temp{A}_{1,0}\exists \temp{E}_{1,1}\ldots \forall\temp{A}_{n,0}\exists \temp{E}_{n,1}\exists V\ldotp \overline{\Phi_{\forall}}
$$
is $\exists$-guarded, because it does not contain any universally quantified FO variables.
Regarding $\Psi_\exists$, it follows immediately from the definition of $\exists$-guarding that
$$\forall\temp{A}_{1,0}\exists \temp{E}_{1,1}\ldots \forall\temp{A}_{n,0}\exists \temp{E}_{n,1}\exists V\ldotp \Psi_\exists
$$
is $\exists$-guarded. Finally, a simple application of the distributive law of propositional logic shows that there is a sentence $\Psi_{V}$ logically equivalent to $\Psi_{|V}$ and such that
$$
\forall\temp{A}_{1,0}\exists \temp{E}_{1,1}\ldots \forall\temp{A}_{n,0}\exists \temp{E}_{n,1}\exists V\ldotp  \Psi_{V}
$$
is CNF-normalized and $\exists$-guarded. Hence, $\Phi_{\ast}$ is $\exists$-guarded up to logical equivalence.

For~\eqref{item:reduct2}, we consider the following reduction.
The idea is to transform a given instance $\Phi$ of $(\forall\exists)^n$3-CNF into a finite $\tau$-structure $\struct{A}$ where 
$\temp{A}_{k}(v)$ holds if $v$ is a literal of the form $x$ or $\neg x$ for some $x$ universally quantified at the $k$-th position in the quantifier prefix of $\Phi$.
Then $\temp{A}_{\ell,0}(v)$ for $\ell\geq k$ holds in an expansion of $\struct{A}$ satisfying the FO part of $\Phi_{\ast}$ if the UP has chosen $\ass{v}\coloneqq 0$; $\neg \temp{A}_{\ell,0}(v)$ indicates that the UP has chosen $\ass{v}\coloneqq 1$.
The intuition for $\temp{E}_{k}(v)$ and $\temp{E}_{k,1}(v)$ is identical, except that we substitute the EP for the UP and the value $1$ for the value $0$. 
Note that the universal and the existential SO variables in $\Phi_{\ast}$ are treated differently, this is necessary if we want $\Phi_{\ast} $ to be closed under homomorphisms.
In structures which correctly encode instances of $(\forall\exists)^n$3-CNF,  this apparently different treatment will collapse to the same semantic condition. 
 
    Fix an instance of $(\forall\exists)^n$3-CNF
    \[ 
        \Phi\coloneqq\forall \bar x_1\exists \bar y_1 \dots \forall \bar x_n \exists \bar y_n \big(\psi_1\land\dots \land \psi_{m}\big),
    \] 
    where each $\psi_i$ is a disjunction of literals. 
    We denote by $\Lambda$ the set of all literals that appear in $\Phi$, i.e., 
    each variable $v$ appears in $\Lambda$ both positively as $v$ and negatively as $\neg v$.
    For convenience, we adopt the following terminology.
    For $\lambda\in \Lambda$,  we write:
    \begin{itemize}
        \item $\lambda \in \psi_i$ ($i\in [m]$) if $\lambda$ appears as a literal in the clause $\psi_i$; 
        \item $\lambda \in \bar{x}_k$ or $\lambda \in \bar{y}_k$ ($k\in [n]$) if $\lambda$ equals $v$ or $\neg v$ and $v\in  \bar{x}_k$ or $v\in  \bar{y}_k$.
    \end{itemize}

    We now construct a $\tau$-structure $\struct{A}$ from $\Phi$.
    The domain is  
    \[
        A\coloneqq\{(\lambda,i)\in \Lambda\times [m]  \mid   \lambda \in  \psi_i\},
    \] 
    and the relation symbols in $\tau$ interpret as follows:
    \begin{align} 
         S^{\struct{A}}\coloneqq \ & \Lambda\times\{1\} \nonumber \\
         T^{\struct{A}}\coloneqq \ & \Lambda\times \{m\} \nonumber \\
         {\succ}^{\struct{A}} \coloneqq \ & \{ \big((\lambda,i),(\lambda',j)\big) \in ( \Lambda\times [m])^2 \mid j = i+1 \} \nonumber \\ 
         R^{\struct{A}} \coloneqq \ &  \{ \big((\lambda,i),(\lambda',j)\big) \in ( \Lambda\times [m])^2 \mid \lambda \text{ is not the negation of } \lambda' 
         \} \label{eq:nonnegation} \\ 
         \smash{\overline{R}}^{\struct{A}} \coloneqq \ &  \{ \big((\lambda,i),(\lambda',j)\big) \in ( \Lambda\times [m])^2 \mid \lambda \text{ is the negation of } \lambda'  \} \label{eq:negation} \\
         \temp{A}_k^{\struct{A}} \coloneqq \ & \{(\lambda,i) \in \Lambda\times [m] \mid \lambda \in \psi_i \text{ and }  \lambda \in \bar x_k    \}  \qquad  (k\in [n]) \label{eq:forall} \\
         \temp{E}_k^{\struct{A}} \coloneqq \ & \{(\lambda,i) \in \Lambda\times [m] \mid \lambda \in \psi_i \text{ and }  \lambda \in \bar y_k  \} \qquad  (k\in [n])   \label{eq:exist}
    \end{align}
    Now we show that $\struct{A}\models \Phi_{\ast}$ if and only if $\Phi$ is satisfiable.
    To this end, consider the mapping $\xi$ defined below. 
    The domain of $\xi$ is the set $\mathcal{A}$ of all Boolean assignments $\bar{a}\in \{0,1\}^\Lambda$ that are consistent for variables and their negations, and its range is the set $\mathcal{E}$ of all
    $\big (\{\temp{A}_{1,0}, \temp{E}_{1,1},\dots, \temp{A}_{n,0}, \temp{E}_{n,1}\}\cup \tau\big)$-expansions of $\struct{A}$. 
    
    Given $\bar{a} \in \mathcal{A}$, we define $\xi(\bar{a})$ as the expansion $\struct{A}^+$ where:
    \begin{align*}
        \temp{A}_{\ell,0}^{\struct{A}^+} \coloneqq \ &  \{(\lambda,i) \in \Lambda\times [m] \mid \bar{a}(\lambda)=0 \text{, } \lambda \in \psi_i \text{, and } \lambda \in \bar x_{k} \text{ for some } k \in [\ell]   \} & (\ell\in [n]) \\
        \temp{E}_{\ell,1}^{\struct{A}^+} \coloneqq \ &  \{(\lambda,i) \in \Lambda\times [m] \mid \bar{a}(\lambda)=1 \text{, } \lambda \in \psi_i \text{, and } \lambda \in \bar y_{k} \text{ for some } k \in [\ell]   \} & (\ell\in [n])  
    \end{align*}

    Observe that $f$ is an injective mapping. 
    We claim that $\xi(\bar{a})\models \Psi_\forall \land \Psi_\exists$  for each $\bar{a}\in \mathcal{A}$.

    \medskip First, $\xi(\bar{a})$ satisfies each of the conjunct of  $\Psi_\forall$:
    \begin{itemize}
        \item eq.~\eqref{eq:the UP_consistent_upwards_across_prefix} because $\temp{A}_{k,0}(x)$ extends upwards from $k$ to $n$;
        \item eq.~\eqref{eq:the UP_consistent_downwards_across_prefix} because $\temp{A}_{\ell,0}(x)$ properly extends downwards to $\temp{A}_{k}(x)$; 
        \item eq.~\eqref{eq:the UP_does_not_choose_to_soon} because $\temp{A}_{k,0}(x)$ only holds if $\temp{A}_{k}(x)$ does not hold;
        \item eq.~\eqref{eq:the UP_does_not_choose_existentially} because $\temp{A}_{k,0}(x)$ never holds for literals of existential variables;
        \item eq.~\eqref{eq:the UP_consistent_with_negated_pairs} because $\temp{A}_{k,0}(x)$ never holds both for a variable and its negation. 
    \end{itemize}

    Next, $\xi(\bar{a})$ satisfies all conjuncts in $\Psi_\exists$:
    \begin{itemize}
        \item eq.~\eqref{eq:the EP_consistent_upwards_across_prefix} because $\temp{E}_{k,1}(x)$ extends upwards to $n$;
        \item eq.~\eqref{eq:the EP_only_chooses_when_its_time} because $\temp{E}_{k,1}(x)$ only holds for if $\temp{E}_{k}(x)$ holds as well; 
        \item eq.~\eqref{eq:the EP_consistent_with_negated_pairs} because $\temp{E}_{k,1}^{\struct{A}^+}$ never holds both for a variable and its negation.  
    \end{itemize}

    Now recall that $\struct{A}$ encodes $\Phi$ in a very specific way\textemdash in accordance with the definitions in eqs.~\eqref{eq:nonnegation},~\eqref{eq:negation},~\eqref{eq:forall}, and~\eqref{eq:exist}. 
    It is easy to verify that, for this reason, we also get the converse:   if $\struct{A}^+\in \mathcal{E}$ satisfies $\Psi_\forall \land\Psi_\exists$, then $\struct{A}^+ \in \xi(\mathcal{A})$.
    In sum, we have that $$\xi(\mathcal{A}) = \{ \struct{A}^+\in \mathcal{E} \mid \struct{A}^+ \models \Psi_\forall \land\Psi_\exists \}. $$

     Now, if $\bar{a} \in \mathcal{A}$ satisfies the quantifier-free part of $\Phi$, then for each clause $\psi_i$ there is a literal $\lambda_i$ whose value is $1$ under $\bar{a}$. 
    We claim that then the expansion  $$(\xi(\bar{a}),\{(\lambda,i) \in \Lambda \times [m] \mid \lambda = \lambda_i\})  $$
   (by a relation interpreting $V$)  satisfies all conjuncts in $\Psi$:
    \begin{itemize}
        \item eq.~\eqref{eq:consist} because $V$ selects only those literals that were assigned $1$;
        \item eq.~\eqref{eq:satisf} because $V$ does not contain both a variable and its negation;
        \item eqs.~\eqref{eq:start} and~\eqref{eq:succ} hold because $V$ selects at least one literal in each clause.  
    \end{itemize} 
 
 Conversely, if $V\subseteq A$ is such that $(\xi(\bar{a}),V)\models\Psi_{|V}$, then
    the set of literals $\lambda$ where $(\lambda,i)\in V$ witnesses
    that $\bar{a}$ is a satisfying Boolean assignment of the variables of $\Phi$.

    The mapping $F$ can be naturally extended to partial Boolean assignments of variables
    of $\Phi$. It is straightforward to push and pull winning strategies for the
    the EP from $\Phi$ to winning strategies for the EP for $\Phi_{\ast}$ in
    $\struct{A}$, and vice versa.    
\end{proof}

\begin{proof}[Proof of Theorem~\ref{thm:even_part}] 
Consider the MSO sentence $\Phi_{\ast}$.
By Lemma~\ref{lem:SAT}\eqref{item:preserv2}, $\Phi_{\ast}$ is logically equivalent to an $\exists$-guarded positive sentence.
By Lemma~\ref{lem:dualizing_positivity}, Lemma~\ref{lem:dualizing_exists_guarding}, and Lemma~\ref{thm:exist_guarded_negative}, $\neg \Phi_{\ast}$ is logically equivalent to an $\forall$-restricted negative sentence.
Hence, the prerequisites of Lemma~\ref{lem:CSP_hammer} are satisfied for $\neg \Phi_{\ast}$.
By items~\ref{item:disjun} and~\ref{item:invhom} of Lemma~\ref{lem:CSP_hammer}, we have that $(\neg \Phi_{\ast})^{\mathrm{CSP}}$ is preserved by disjoint unions and inverse homomorphisms.
Note that, by construction, $(\neg \Phi_{\ast})^{\mathrm{CSP}}$ is an $(\exists \forall)^n$-MSO sentence.
It follows from Theorem~\ref{theorem:omega_cat_CSPs} that there 
exists an $\omega$-categorical structure $\struct{D}$ with $\CSP(\struct{D})=\fm((\neg \Phi_{\ast})^{\mathrm{CSP}})$.
By Lemma~\ref{lem:SAT}\eqref{item:reduct2}, $(\forall\exists)^n$3-CNF has a polynomial-time reduction
to model-checking for $\Phi_{\ast}$. 
All our reductions are many-one, and hence it follows immediately that $(\exists\forall)^n$3-DNF has a polynomial time many-one reduction to model-checking for $\neg \Phi_{\ast}$. 
It follows from item~\ref{item:evalu} of Lemma~\ref{lem:CSP_hammer} that model-checking for $\neg \Phi_{\ast}$ reduces in polynomial-time to model-checking for $(\neg \Phi_{\ast})^{\mathrm{CSP}}$. 
We conclude that $\CSP(\struct{D})$ is $\Sigma_{2n}^\PO$-complete.
The $\Pi_{2n+1}^\PO$-complete case can be treated analogously by
adding an existential SO variable $\exists E_{0,1}$ to $\Phi^\ast$ in the outermost
position, and then modifying $\Psi_\exists$ and $\Psi$, accordingly.
\end{proof}


\section{Conclusion and outlook}

We showed that there are $\omega$-categorical CSPs complete for any level of the polynomial hierarchy above \textrm{NP}.
To this end, we developed a new tool, the CSP hammer, for producing MSO sentences closed under inverse homomorphisms and disjoint unions.
By a recent result of Bodirsky, Kn\"{a}uer, and Rudolph~\cite{bodirsky2020datalog}, every such sentence defines an $\omega$-categorical CSP.  
Next, in Section~\ref{sect:odd_part}, we gave a general construction of $\omega$-categorical CSPs from bounded alternation finite-domain QCSPs.
When adjusted with the CSP hammer, this construction is able to produce $\Sigma^{\PO}_{2n+1}$- and $\Pi^{\PO}_{2n}$-complete $\omega$-categorical CSPs for every $n \geq 1$.
One drawback of this method is that it does not always preserve the complexity of the original QCSP; achieving a  fine-grained correspondence between $\omega$-categorical CSPs and bounded alternation finite-domain QCSPs would require turning the CSP hammer into a more robust toolset.
We ask the following question, which seems to be non-trivial.

\medskip
\noindent\textbf{Question 1:} Is it true that for every finite structure $\struct B$ and every quantifier prefix $\mathrm{Q}_1\cdots \mathrm{Q}_n$ there exists an $\omega$-categorical structure
$\struct D$ such that $\mathrm{Q}_1\cdots \mathrm{Q}_n$QCSP($\struct B$) and $\CSP(\struct D)$ are polynomial-time equivalent?
\medskip 

We remark that a positive answer would imply the existence of $\omega$-categorical CSPs complete for some exotic complexity classes in $\PH$ such as DP or $\Theta^{\text{P}}_2$~
\cite{zhuk2022qcsp}.
The obstacle is getting around inverse surjective homomorphisms (see Example~\ref{ex:positivity_sucks}), which our CSP hammer resolves in an adhoc manner.
We ask the same question for the unbounded case.

\medskip
\noindent\textbf{Question 2:} Is it true that for every finite structure $\struct B$ there is an $\omega$-categorical structure
$\struct D$ such that QCSP($\struct B$) and $\CSP(\struct D)$ are polynomial-time equivalent?

\medskip 

In Section~\ref{sect:even_part}, we gave a polynomial time reduction from the complement $(\forall\exists)^n$3-CNF to model checking for MSO sentences which, when combined with the CSP hammer, produces 
$\Sigma^{\PO}_{2n}$- and $\Pi^{\PO}_{2n+1}$-complete $\omega$-categorical CSPs for every $n \geq 1$.
In some aspects, our construction is quite similar to the one used in~\cite{bodirsky2008non} for obtaining $\Pi^{\mathrm{P}}_{n}$-complete and ultimately \textrm{coNP}-intermediate $\omega$-categorical CSPs.
One notable difference is that~\cite[Theorem~2]{bodirsky2008non}  uses the Fra\"{i}ss\'{e} amalgamation method for constructing $\omega$-categorical CSP templates while our result uses~\cite[Corollary~17]{bodirsky2020datalog}, which itself goes back to~\cite[Theorem~4.27]{martin2012scope}. 
This observation raises the question whether one can replicate the construction of an intermediate $\omega$-categorical CSP for \textrm{NP} instead of \textrm{coNP}.
Below, we restate a question of Bodirsky~\cite[Question~49]{bodirsky2021complexity}. 

\medskip
\noindent\textbf{Question 3:} Assuming $\textrm{P}\neq \textrm{NP}$, does there exist an \textrm{NP}-intermediate $\omega$-categorical CSP?

\medskip Regarding the universal-algebraic upgrade~\cite[Theorem~1.8]{gillibert2022symmetries} to~\cite[Theorem~2]{bodirsky2008non}, it would be interesting to know if a similar result can be obtained for the $\Sigma^{\mathrm{P}}_{n}$-levels of \textrm{PH}. 
For $n=1$, this would confirm that the assumption of being a reduct of a finitely bounded homogeneous structure is essential in the algebraic dichotomy conjecture for infinite-domain CSPs~\cite{barto_pinsker_journal}.  

\medskip
\noindent\textbf{Question 4:} Does there exist an $\omega$-categorical structure $\struct{D}$ whose CSP is $\Sigma^{\mathrm{P}}_{n}$-complete for some $n\geq 1$ and such that the polymorphisms of $\struct{D}$ satisfy some non-trivial identities?

\bibliographystyle{plain}
\bibliography{references}
\end{document}